\newcommand{\rem}[1]{}
\DeclareMathOperator{\ad}{ad}
\def\thefigure{\thesection.\@arabic\c@figure}
\def\fps@figure{h, t}
\def\thetable{\thesection.\@arabic\c@table}
\def\fps@table{h, t}
\newcommand\possesivecite[1]{\citeauthor{#1}'s \citep{#1}}
\begin{document}

\allowdisplaybreaks

\newtheorem{theorem}{Theorem}[section]
\newtheorem{definition}[theorem]{Definition}
\newtheorem{lemma}[theorem]{Lemma}
\newtheorem{remark}[theorem]{Remark}
\newtheorem{proposition}[theorem]{Proposition}
\newtheorem{corollary}[theorem]{Corollary}
\newtheorem{example}[theorem]{Example}

\newcommand{\pair}[2]{{\big\langle {#1}\, , \, {#2}\big\rangle}}
\newcommand{\Pair}[2]{{\Big\langle {#1}\, , \, {#2}\Big\rangle}}
\newcommand{\ppair}[2]{{\bigg\langle {#1}\, , \, {#2}\bigg\rangle}}
\newcommand{\Lpair}[2]{{\big( {#1}\, \big| \, {#2}\big)}}
\newcommand{\LPair}[2]{{\Big( {#1}\, \Big| \, {#2}\Big)}}
\newcommand{\Lppair}[2]{{\bigg( {#1}\, \bigg| \, {#2}\bigg)}}

\def\below#1#2{\mathrel{\mathop{#1}\limits_{#2}}}

\newcommand{\lform}[2]{{\big( {#1} \big|\, {#2}\big)}}
\newcommand{\Lform}[2]{{\Big( {#1} \Big|\, {#2}\Big)}}
\newcommand{\scp}[2]{{\Big\langle {#1}\, , \, {#2}\Big\rangle}}
\newcommand{\prt}{\partial}
\newcommand{\sig}{\sigma}
\newcommand{\mR}{{\mathbb{R}}}
\newcommand{\id}{{\mathrm{id}}}
\newcommand{\ti}{\times}
\newcommand{\Ad}{\text{Ad}}
\renewcommand{\ad}{\text{ad}}
\newcommand{\de}{\delta}
\newcommand{\om}{\omega}
\newcommand{\al}{\alpha}
\newcommand{\ga}{\gamma}
\providecommand{\th}{}
\renewcommand{\th}{\theta}
\newcommand{\la}{\lambda}
\newcommand{\be}{\beta}
\newcommand{\Om}{\Omega}
\newcommand{\mg}{{\mathfrak g}}
\newcommand{\CX}{{\mathcal X}}
\newcommand{\cst}{\text{cst}}
\newcommand{\ze}{\zeta}
\newcommand{\gd}{\dot{g}}
\newcommand{\deta}{\dot{\eta}}
\newcommand{\dxi}{\dot{\xi}}
\newcommand{\gp}{g^\prime}
\newcommand{\ginv}{{g}^{-1}}
\newcommand{\bfu}{\mathbf{u}}
\newcommand{\bfw}{\mathbf{w}}
\newcommand{\zhat}{\hat{z}}
\newcommand{\beq}{\begin{equation}}
\newcommand{\beqs}{\begin{equation*}}
\newcommand{\beqa}{\begin{eqnarray}}
\newcommand{\beqas}{\begin{eqnarray*}}
\newcommand{\eeq}{\end{equation}}
\newcommand{\eeqs}{\end{equation*}}
\newcommand{\eeqa}{\end{eqnarray}}
\newcommand{\eeqas}{\end{eqnarray*}}

\pagestyle{myheadings}

\markright{Bruveris et al.  \hfill  {The momentum map representation of images} \hfill  \qquad}

\title{The momentum map representation of images}
\author{M. Bruveris$^{1}$, F. Gay-Balmaz$^{2}$, D. D. Holm$^{1}$,  and T. S. Ratiu$^{3}$}
\addtocounter{footnote}{1}
\footnotetext{Department of Mathematics, Imperial College London. London SW7 2AZ, UK. Partially supported by Royal Society of London, Wolfson Award.
\texttt{m.bruveris08@imperial.ac.uk, d.holm@imperial.ac.uk}
\addtocounter{footnote}{1} }
\footnotetext{Control and Dynamical Systems, California Institute of Technology 107-81, Pasadena, CA 91125, USA and Laboratoire de 
M\'et\'eorologie Dynamique, \'Ecole Normale Sup\'erieure/CNRS, Paris, France. Partially supported by a Swiss NSF grant.
\texttt{fgbalmaz@cds.caltech.edu}
\addtocounter{footnote}{1} }
\footnotetext{Section de
Math\'ematiques and Bernoulli Center, \'Ecole Polytechnique F\'ed\'erale de
Lausanne.
CH--1015 Lausanne. Switzerland. Partially supported by a Swiss NSF grant.
\texttt{Tudor.Ratiu@epfl.ch}
\addtocounter{footnote}{1} }

\date{Revised version August 1, 2010}
\maketitle

\makeatother

\maketitle

\begin{abstract}
This paper discusses the mathematical framework for designing methods of large deformation matching (LDM) for image registration in computational anatomy. After reviewing the geometrical framework of LDM image registration methods, a theorem is proved showing that these methods may be designed by using the actions of diffeomorphisms on the image data structure to define their associated momentum representations as (cotangent lift) momentum maps.  To illustrate its use, the momentum map theorem is shown to recover the known algorithms for matching landmarks, scalar images and vector fields. After briefly discussing the use of this approach for Diffusion Tensor (DT) images, we explain how to use momentum maps in the design of registration algorithms for more general data structures. For example, we extend our methods to determine the corresponding momentum map for registration using semidirect product groups, for the purpose of matching images at two different length scales. Finally, we discuss the use of momentum maps in the design of image registration algorithms when the image data is defined on manifolds instead of vector spaces.
\end{abstract}

\tableofcontents

\section{Introduction}
\label{sec-Intro}

Large deformation diffeomorphic matching methods (LDM) for image registration are based on minimizing the sum of a \emph{kinetic energy metric}, plus a \emph{penalty term}. The former ensures that the deformation follows an optimal path, while the latter ensures an acceptable tolerance in image mismatch. The LDM approaches were introduced and systematically developed in \citet{Trouve1995, Trouve1998, DuGrMi1998, JoMi2000, MiYo2001, Beg2003}, and \citet{Begetal2005}.  See \citet{MiTrYo2002} for an extensive review of this development. The LDM approach fits within \possesivecite{Gr1993} \emph{deformable template} paradigm for image registration. Grenander's paradigm, in turn, is a development of the biometric strategy introduced  by \citet{Thompson1917} of comparing a template image $I_0$ to a target image $I_1$ by finding a smooth transformation that maps the template to the target. This transformation is assumed to belong to a Lie group $G$ that acts on the set of images $V$ containing $I_0$ and $I_1$. The effect of the transformation on the data structure is called the \emph{action} $G\times V\to V$ of the Lie group $G$ on the set $V$. For example, the action of $g\in G$ on $I_0\in V$ is denoted as $g I_0\in V$. \medskip

The objective of LDM is not just to determine a deformation $g_1 \in G$ such that the group action $g_1 I_0$ of $g_1\in G$ on the template $I_0\in V$ approximates the target $I_1\in V$ to within a certain tolerance. Rather, the objective of LDM is to find the \emph{optimal} path $g_t\in G$ continuously parametrized by time $t\in\mathbb{R}$ that smoothly deforms $I_0$ through $I_t = g_t I_0$ to $g_1 I_0$. The optimal path $g_t\in G$ is defined as the path that costs the least in time-integrated kinetic energy for a given tolerance. Hence, the deformable template method may be formulated as an optimization problem based on a trade-off between the following two properties: (i) the tolerance for inexact matching between the final deformed template $g_1 I_0$ and the target template $I_1$; and (2) the cost of time-integrated kinetic energy of the rate of deformation along the path $g_t$. The former is defined by assigning a norm $\lVert\,\cdot\,\rVert: V\to\mathbb{R}$ to measure the mismatch $\lVert g_t I_0-I_1\rVert$ between the two images. The latter is obtained by choosing a Riemannian metric $|\,\cdot\,|: TG\to\mathbb{R}$ that defines the kinetic energy on the tangent space $TG$ of the group $G$. In this setting, a notion of distance between two images emerges, that allows one to compare similarity of images in terms of transformations. This is the setting for the development of computational anatomy using the inexact template matching approach for the registration of images. For more details and background about LDM, see \citet{MiYo2001, MiTrYo2002, Beg2003}, and \citet{Begetal2005}.
\medskip

In applications of LDM to the analysis of features in bio-medical images, the optimal path $g_t$ is naturally chosen from among the diffeomorphic transformations $G={\rm Diff}(\Omega)$ of an open, bounded domain $\Omega$. The domain $\Omega$ will be taken to be the ambient space in which the anatomy is located. Recall that a \emph{diffeomorphism} $g\in {\rm Diff}(\Omega)$ is a smooth invertible map (i.e., a invertible function that maps the domain $\Omega$ onto itself) whose inverse is also smooth. The one-to-one property of these transformations ensures that disjoint sets remain disjoint, so that, e.g.,  no fusion of points occurs under LDM. Continuity of the diffeomorphisms ensures that connected sets remain connected. Smoothness of these transformations ensures preservation of the smoothness of boundaries of the anatomical objects in bio-medical images. The invertibility of diffeomorphisms and their stability under composition also allows one to regard $\rm Diff(\Omega)$ formally as a Lie group.\medskip

Different types of bio-medical images contain various types of information that may be represented in a number of geometrically different types of data structures. For example, the data structures for MR images are scalar functions, or densities, while data obtained DT-MRI can be represented as symmetric tensor fields. Naturally, the design of image registration algorithms based on the theory of transformations must take differences in data structure into account. \medskip

Registration of DT-MRI data -- necessary for the quantitative analysis of anatomical features such as tissue geometry and local fiber orientation -- is much more complicated than registration of scalar image data. This complication arises because local fiber orientation changes under a diffeomorphic transformation and this reorientation has to be included properly in the design of LDM matching algorithms for DT-MRI. A further complication arises because it is not entirely understood how macroscopic deformation influences microscopic properties such as fiber-orientation and diffusivity of water. Though significant efforts have been directed at scalar image registration, little work has been done on matching tensor images using LDM. For the pioneering efforts in the use of LDM with DT-MRI see \citet{Alex-etal-1999, Alex-etal-2001, Cao-etal2005, Cao2006}.\medskip

In summary, the LDM approach models computational anatomy as a deformation of an initial template configuration. The images describing the anatomy are defined on an open bounded set $\Omega$ and the path from the template image $I_0$ to the target image $I_1$ is viewed as the continuous deformation $I_t:=g_t I_0$ under the path of diffeomorphic transformations $g_t\in {\rm Diff}(\Omega)$ acting on the initial  template $I_0$. Importantly, the optimal path of diffeomorphic transformations $g_t$ depends on three main factors: namely, how the action $g_t I_0$ is defined, as well as the definitions of the kinetic energy and the tolerance norm. Images representing different types of information may transform differently under $G={\rm Diff}(\Omega)$. Hence, the optimal path $g_t\in {\rm Diff}(M)$ sought in the LDM approach will depend on the geometrical properties of the data structures that represent the information in the various types of images. \medskip

In the geometrical framework for the LDM approach, the optimal transformation path $g_t\in {\rm Diff}(M)$ may be estimated by using the variational optimization method developed in \citet{Beg2003} and 
\citet{Begetal2005}. Namely, the optimal path for the matching diffeomorphism in this problem may be obtained from a gradient-descent algorithm based on the directional derivative of the cost functional. The cost functional must balance the energy of the deformation path versus the tolerance of mismatch, while taking  proper account of the transformation properties of the image data structure. Other promising methods besides LDM exist, such as the metamorphosis approach discussed in \citet{MiYo2001, TrYo2005} and \citet{HoTrYo2009}.  Metamorphosis is a variant of LDM, that allows the evolution $I_t$ of the image template to deviate from pure deformation. It is also a promising method in the LDM family, but its discussion is beyond our present scope. \medskip

Our aim in this paper is to show that a simple and universal property of  transformation theory, called the \emph{momentum map} can be used to identify and derive the LDM algorithm corresponding to any data structure on which diffeomorphisms may act. That is, the momentum map approach enables one to tailor the LDM algorithm to the transformation properties of the data structure of the images to be matched. For basic introductions to the momentum map in geometric mechanics, see \citet{Ho2008} or \citet{MaRa1999}. For more extensive treatments see \citet{AbMa1978, OrRa2004}.
\medskip

Our interests here focus mainly on deriving the momentum maps corresponding to the various types of data structures, rather than developing the matching dynamics that they subsequently produce. In particular, we shall discuss how one uses the momentum map approach to cope with different data structures, such as densities, vector fields or tensor fields, by recognizing their shared properties in a unified geometrical framework. \medskip

{
The discussion in this paper is mostly on the informal level concerning completions of the diffeomorphism group and the well-posedness of the resulting Euler-Poincar\'e equation.
}

\paragraph{Plan of the paper.} In Section \ref{sec-GeomReg} we begin by discussing the geometry underlying the standard algorithm for LDM introduced in \citet{Begetal2005}. With this motivation we then introduce an abstract framework in which to model registration problems. We derive the equivalent of Beg's formula in the abstract framework in Theorem \ref{delta_E_left} and show that it has the structure of a momentum map. The end of the section is devoted to a discussion of the EPDiff equation and the importance of the initial momentum.

After presenting the abstract framework we apply it in Section \ref{sec-GeomRegDiff} to a range of examples commonly encountered in computational anatomy: landmarks, scalar images, vector fields and symmetric tensor fields arising from DT-MRIs. We emphasize the momentum maps in these examples as the main ingredient in our framework and show how to recover results found in the literature.

Section \ref{sec-GeomReg-SDP} is devoted to a generalization of standard LDM in a different direction. This section takes into account the presence of two different length scales in the image and formulates a version of LDM that uses a semidirect product of two diffeomorphism groups --- one for each length scale --- to perform the registration. We show that for images defined by scalar functions this approach yields a momentum map that is very similar to Beg's formula, except that we use the sum of two kernels, instead  of only one kernel.

Besides the formulation of LDM as in \citet{Begetal2005}, other penalty terms have been proposed by \citet{Beg2007}, \citet{Avants2008} and \citet{ZaHaNi2009}. We show in Section \ref{sec-symm} that these other proposed penalty terms result in momentum map structures that are similar to those in the formulation of Theorem \ref{delta_E_left}.

Our approach can also be generalized to include data structures defined on manifolds that do not possess a linear structure. In Section \ref{nonlin_gen} we consider the extensions of the theory required to deal with data structure defined on manifolds and apply these extensions in examples.

\section{Geometry of Registration}
\label{sec-GeomReg}

\subsection{Motivation}
\label{subsec-Motivation}

The optimal solution to a non-rigid template matching problem is defined as the shortest, or least expensive, path of continuous deformations of one geometric object (template) into another one (target). The goal is the find the path of deformations of the template that is shortest, or costs the least, for a given tolerance in matching the target. The approach focuses its attention on the properties of the action of a Lie group $G$ of transformations on the set of deformable templates.  The attribution of a cost to this process is based on metrics defined on the tangent space $TG$ of the group $G$, following \possesivecite{Gr1993} principles.

\subsubsection*{Formulation of LDM}
In the LDM framework this template matching procedure for image registration is formulated as follows. Suppose an image, say a medical image, is acquired using MRI, CT, or some other imaging technique. To begin, consider the case that the information in an image can be represented as a function $I: \Omega \to \mathbb{R}$, where $\Omega \subseteq \mathbb{R}^d$ is the domain of the image. 
We denote the data structure by writing $I\in V=\mathcal{F}(\Omega)$, the space of smooth functions encoding the information in the images. One usually deals with planar ($d=2$) or volumetric ($d=3$) images. Consider the comparison of two images, consisting of a function $I_0$ representing the template image  and  $I_1$ the target image. The goal is to find a transformation $\phi:\Omega \to \Omega$, such that the transformed image $I_0 \circ \phi^{-1}$ matches the target image $I_1$ with minimal error, as measured by, say, the $L^2$ norm of their difference
\[
E_2(I_0, I_1) = \lVert I_0 \circ \phi^{-1} - I_1 \rVert^2_{L^2}
\,.
\]
For this purpose, one introduces a time-indexed deformation process, that starts at time $t = 0$ with the template (denoted $I_0$), and reaches the target $I_1$ at time $t = 1$.  At a given time $t$ during this process, the current object $I_t$ is assumed to be the image of the template, $I_0$, obtained through a sequence of deformations.  

We also want the time-indexed transformation to be \emph{regular}. To ensure its regularity, we require the transformation to be generated as the flow of a smooth time dependent vector field $u : [0,1]\times \Omega \to \Omega$, i.e. $\phi=\phi_1$ with
\begin{equation}
\label{eq-mot-flow}
\partial_t \phi_t = u_t \circ \phi_t, \quad \phi_0(x)=x.
\end{equation}
We measure the regularity of $u_t$ via a kinetic-energy like term
\[
E_1(u_t) = \int_0^1 \lvert u_t \rvert^2_H dt
\]
where $\lvert u_t \rvert_\mathcal{H}$ is a norm on the space of vector fields on $\Omega$ defined in terms of a positive self-adjoint differential operator $L$ by
\begin{equation}
\label{Lop-def}
\lvert u_t \rvert^2_\mathcal{H} = \langle u, Lu \rangle_{L^2}\,.
\end{equation}
The operator $L$ is commonly chosen as $Lu = u - \alpha^2 \Delta u$. We denote by $\mathcal{H}$ this space of vector fields.

Following \citet{Begetal2005} we can cast the problem of registering $I_0$ to $I_1$ as a variational problem. Namely, we seek to minimize the cost
\begin{equation}
\label{eq-mot-cost}
E(u_t) = \int_0^1 \lvert u_t \rvert^2_\mathcal{H} dt + \frac{1}{2\sigma^2} \lVert I_0 \circ \phi^{-1}_1 - I_1 \rVert^2_{L^2}
\end{equation}
over all time-dependent vector fields $u_t$. The transformation $\phi_1$ is related to the vector field $u_t$ via \eqref{eq-mot-flow}. A necessary condition for a vector field $u_t$ to be minimal is that the derivative of the cost functional $E$ vanishes at $u_t$, that is $D E(u_t)=0$. It is shown in \citet[theorem 2.1.]{Begetal2005} and \citet[theorem 4.1]{MiTrYo2002} that $D E(u_t)=0$ is equivalent to
\begin{equation}
\label{eq-mot-grad}
L u_t = \frac{1}{\sigma^2} \lvert \det D\phi_{t,1}^{-1} \rvert (J^0_t - J^1_t) \nabla J^0_t
\,,
\end{equation}
where $\phi_{t,s}=\phi_t \circ \phi_s^{-1}$ and $J^0_t = I_0 \circ \phi^{-1}_{t,0}$, $J^1_t = I_1 \circ \phi^{-1}_{t,1}$. This condition is then used in \citet{Begetal2005} to devise a gradient descent algorithm for numerically computing the optimal transformation $\phi_1$.

\subsubsection*{Geometric reformulation of LDM}
Formula \eqref{eq-mot-grad} can be reformulated equivalently in a way that emphasizes its geometric nature. As we will show in Section \ref{subsec-AbFrm}, formula \eqref{eq-mot-grad} is equivalent to
\begin{equation}
\label{eq-mot-grad-abstr}
Lu_t = -\frac{1}{\sigma^2} \left(\phi_t\cdot I_0\right) \diamond\left( \phi_{t,1}\cdot \left(\phi_1\cdot I_0 - I_1 \right)^\flat\right)
\,.
\end{equation}
This formula can be understood as follows: the first factor $\phi_t\cdot I_0$ is the action of the transformation $\phi_t$ on the image $I_0\in V=\mathcal{F}(\Omega)$. This is defined as the composition of functions, $\phi_t\cdot I_0 = I_0 \circ \phi_t^{-1}$. The  flat-operator $\flat:V\rightarrow V^*$ maps images in $V$ to the objects in $V^*$ dual to scalar functions, using the inner product on $V$. (These dual objects are the scalar densities.) To describe such an operator, one first needs to choose a convenient space $V^*$ in nondegenerate duality with $V$. We choose to identify $V^*$ with functions in $\mathcal{F}(\Omega)$, by using the $L ^2$-pairing
\[
\left\langle f,I\right\rangle:=\int_\Omega f(x) I(x) dx,
\]
where $dx$ is a fixed volume element on $\Omega$. With this choice, the flat operator $(\,\flat\,)$ is simply the identity map on functions. However, it is important that we conceptually distinguish between elements in $V$ and in its dual $V^*$. Indeed, the action of a transformation $\phi$ on an element in $V^*$ is the dual action, and does not coincide with the action on $V$ in general. 

In our example, the action on $f\in V^*$ is
\begin{equation}
\label{eq-mot-action-den}
\phi\cdot f = \lvert \det D\phi^{-1} \rvert (f \circ \phi^{-1}).
\end{equation}
To see how this action arises, we need the abstract definition of a \emph{dual action}, which is
\[
\langle \phi\cdot f, I \rangle = \langle f, \phi^{-1}\cdot I \rangle.
\]
{
\begin{remark}\label{left-act-inv}{\rm 
The inverse in the definition of the dual action is necessary to ensure that we have a left action: 
\[\phi \cdot(\psi\cdot f) = (\phi\circ \psi)\cdot f\,.\] }
\end{remark}
}
Using this definition and the change of variables formula we see that
\begin{align*}
\langle \phi\cdot f, I \rangle &= \langle f, \phi^{-1}\cdot I \rangle = \int_\Omega (I \circ \phi) f dx = \int_\Omega  I (f \circ \phi^{-1}) \lvert \det D\phi^{-1} \rvert dx \\
&= \left\langle \lvert \det D\phi^{-1} \rvert \left(f \circ \phi^{-1}\right), I \right\rangle
\,.
\end{align*}
Therefore, in the second factor  $\phi_{t,1}\cdot \left(\phi_1\cdot I_0 - I_1 \right)^\flat$ of equation \eqref{eq-mot-grad-abstr}, the term $\left(\phi_1\cdot I_0 - I_1 \right)^\flat$ is interpreted as a function in $V^*$. Consequently, the action is the dual action given by
\[
\phi_{t,1}\cdot  \left(\phi_1\cdot I_0 - I_1 \right)^\flat= \lvert \det D\phi_{t,1}^{-1} \rvert (J^0_t - J^1_t)
\,.
\]

It remains to explain the last ingredient; namely, the diamond map in equation (\ref{eq-mot-grad-abstr}),
\begin{equation}
\diamond: V \times V^\ast \to \mathcal{H}^\ast.
\label{cotlift-momap}
\end{equation}
This is the \emph{cotangent-lift momentum map} associated to the given representation of 
{
the Lie group $G$ on the vector space $V$.
}
Such momentum maps are familiar in geometric mechanics; see, e.g., \citet{Ho2008} or \citet{MaRa1999}. The momentum map (\ref{cotlift-momap}) takes elements of $V \times V^\ast$, regarded as the cotangent bundle $T^\ast V$ of the space of images $V$, to objects in $\mathcal{H}^*$, dual to the vector fields in $\mathcal{H}$. The map $\diamond$ depends on the choice of $\mathcal{H}^*$. For example, using the $L^2$-pairing with respect to the fixed volume element $dx$ and relative to the  Euclidean inner product $(\,\boldsymbol{\cdot}\,)$ in $\mathbb{R}^d$, the momentum map (\ref{cotlift-momap}) is defined for images that are 
{
scalar functions $I \in V=\mathcal{F}(\Omega)$ and densities $f\in V^*=\mathcal{F}^*(\Omega)$ by the relation
}
\begin{equation}
\label{eq-mot-diamond}
\langle  I \diamond f\,,\,u\, \rangle  = \int_\Omega -f \nabla I\boldsymbol{\cdot} u \,dx,
\end{equation}
{
so that in this case $I \diamond f=-f \nabla I$ using the $L^2$ pairing.
}
\begin{remark}[Momentum maps]$\,$
\label{rem-MomMap}\normalfont 

\begin{itemize}
\item
In geometric mechanics, momentum maps generalize the notions of linear and angular momenta. For a mechanical system, whose configuration space is a manifold $M$ acted on by a Lie group $G$, the momentum map $\mathbf{J}: T^\ast M \to \mathfrak{g}^\ast$ assigns to each element of the phase space $T^\ast M$ a generalized ``momentum" in the dual $\mathfrak{g}^*$ of the Lie algebra $\mathfrak{g}$ of the Lie group $G$.  For example, the momentum map for spatial translations is the linear momentum and for rotations it is the angular momentum. 

The importance of the momentum map in geometric mechanics is due to Noether's theorem. Noether's theorem states that the generalized momentum $\mathbf{J}$ is a constant of motion for the system under consideration when its Hamiltonian is invariant under the action of $G$ on $T^*M$. This theorem enables one to turn symmetries of the Hamiltonian into conservation laws.

\item
{
[Notation for momentum maps: $\mathbf{J}$ versus $\diamond\,$] For convenience in
referring to earlier work, e.g., \cite{HoMaRa1998,HoTrYo2009}, we distinguish
between the notation $\mathbf{J}$ for general momentum maps $\mathbf{J}: T^\ast M
\to \mathfrak{g}^\ast$ and the notation $\diamond$ for the particular type of
cotangent-lift momentum maps on linear spaces, $\diamond: V \times V^\ast \to
\mathcal{H}^\ast$ that typically appear in applications of Euler-Poincar\'e theory,
as in equation (\ref{cotlift-momap}). 
}
\end{itemize}
\end{remark}

\begin{remark}[Momentum of images]
\label{rem-ImageMomentum}\normalfont 
Momentum maps for images have been discussed previously. In particular, the momentum map for the EPDiff equation of \citet{HoMa2004} produces an isomorphism between landmarks (and outlines) for images and singular soliton solutions of the EPDiff equation. This momentum map was shown in \citet{HoRaTrYo2004} to provide a complete parameterization of the landmarks by their canonical positions and momenta. A related interpretation of momentum for images in computational anatomy was also discussed in \citet{MiTrYo2006}.

\end{remark}

We now explain in which sense expression \eqref{eq-mot-diamond} is a momentum map. Even though the cost functional \eqref{eq-mot-cost} is not invariant under the action of the diffeomorphism group, one may still define the momentum map $\diamond: V \times V^\ast \to \mathcal{H}^\ast$ via
\[
\langle I \diamond f, u \rangle = \langle f, uI \rangle
\,,
\]
as done in geometric mechanics, see \citet{MaRa1999} and \citet{Ho2008}. The action $uI$ is defined as $uI: = \partial_t \vert_{t=0} \phi_t\cdot I$ for a curve $\phi_t$ such that $\phi_0(x)=x$ and $\partial_t \vert_{t=0} \phi_t = u$. This is the infinitesimal action corresponding to the action of $\operatorname{Diff}(\Omega)$ on $V$. Although the $\diamond$-map does not provide a conserved quantity of the dynamics, it nevertheless helps our intuition and gives us a way to structure the formulas.

Let us apply this concept to image registration for $I\in{\cal F}(\Omega)$, the scalar functions on the domain $\Omega$. The infinitesimal action is given by
\[
uI = \partial_t \big\vert_{t=0} (I \circ \phi_t^{-1}) = -\nabla I \boldsymbol{\cdot} u
\]
and thus the momentum map in this case is
\[
\langle I \diamond f, u \rangle_{V^\ast \times V} = \langle f, -\nabla I \boldsymbol{\cdot} u \rangle = \int_\Omega -(\nabla I \boldsymbol{\cdot} u) f dx
= \langle -f \nabla I, u \rangle_{\mathcal{H}^\ast \times \mathcal{H}}
\,,
\]
as stated in formula \eqref{eq-mot-diamond}. The key is to reinterpret the $L^2$-duality between the functions $-\nabla I \boldsymbol{\cdot} u$ and $f$ as the duality between the vector fields $-f \nabla I$ and $u$.

Using formulas \eqref{eq-mot-diamond} and \eqref{eq-mot-action-den} in equation \eqref{eq-mot-grad-abstr}, we regain the stationarity condition \eqref{eq-mot-grad}.

\begin{remark}\normalfont
Writing the gradient of the cost functional \eqref{eq-mot-grad} in the geometric form \eqref{eq-mot-grad-abstr} has several advantages. For example, it allows us to generalize an algorithm that matches images as scalar functions, to cope with different data structures, such as densities, vector fields, tensor fields and others. Making this generalization allows one to see the underlying common geometrical framework in which we may  unify the treatment of these various data structures. We can also keep the data structure fixed and vary the norm $\lVert\,\cdot\,\rVert$, and thereby alter our criteria of how we measure the distance between two objects.

This geometric framework also enables comparison of different formulations of LDM. For example, one may compare the approach from 
\citet{Begetal2005} presented here with the symmetric approach from \citet{Avants2008} and \citet{Beg2007} and the unbiased approach from \citet{ZaHaNi2009}, in terms of their respective momentum maps.

In addition, the geometrical setting introduced here for image analysis allows us not only to vary the data structure, but also to change the group of transformations. We will explore this possibility in Section \ref{sec-GeomReg-SDP}, when we consider image registration using two diffeomorphism groups simultaneously.
\end{remark}

\subsection{Abstract Framework}
\label{subsec-AbFrm}

Diffeomorphic image registration may be formulated abstractly as follows. Consider a vector space $V$ of \textit{deformable objects} on which an inner product $\langle\,\cdot\,,\,\cdot\,\rangle$ is defined, that allows us to measure distances between two such objects. We can think of $V$ as containing brain MRI images, an example frequently encountered in computational anatomy \citep{MiTrYo2002}. The distance between two objects can be defined as $\lVert I - J \rVert^2 = \langle I-J, I-J \rangle$, which in the case of images is the $L^2$-distance
\[
\int_\Omega \lvert I(x)-J(x) \rvert^2 dx.
\]

The second ingredient is a Lie group $G$ of \textit{deformations}, that acts on the space $V$ of deformable objects from the left
\[
(g,I) \in G \times V \mapsto gI \in V.
\]
In computational anatomy $G$ usually is taken to be the group of diffeomorphisms $\operatorname{Diff}(\Omega)$ or variants of it. A diffeomorphism $\phi \in \operatorname{Diff}(\Omega)$ acts on images by \emph{push-forward}; that is, by pull back by the inverse map,
\[
\phi \cdot I := \phi_\ast I= I \circ \phi^{-1} \quad \textrm{or} \quad \phi\cdot I(x) = I(\phi^{-1}(x)).
\]
Roughly speaking, this action corresponds to drawing the image $I$ on a rubber canvas, then deforming the canvas by $\phi$ and watching the image being deformed along with the canvas. It is also the basis for the familiar \emph{Lagrangian representation} of fluid dynamics as described in \citet{HoMaRa1998}. 

Given a curve $t \mapsto g_t$ of transformations, we define the right-invariant velocity vector $u_t \in \mathfrak{g}$ as
\begin{equation}
\label{eq-abfrm-rv}
u_t = (\partial_t g_t) g_t^{-1}.
\end{equation}
We obtain $u_t$ by taking the tangent vector of $g_t$ and right-translating it back to the tangent space at the identity $T_e G=\mathfrak{g}$, which is the Lie algebra of $G$. Rewriting \eqref{eq-abfrm-rv} as
\begin{equation}
\label{eq-abfrm-rec}
\partial_t g_t = u_t g_t
\end{equation}
and specifying initial conditions at some time $t=s$, we obtain an ordinary differential equation (ODE). If we start with velocity vectors $u_t$, we can solve this ODE to reconstruct the curve $g_t$. This corresponds to the construction of diffeomorphisms as flows of vector fields via the equation
\[
\partial_t \phi_t = u_t \circ \phi_t, \quad \phi_0(x)=x.
\]
Let us denote by $g^u_{t,s}$ the solution of the ODE \eqref{eq-abfrm-rec} rewritten as 
\[
\partial g^u_{t,s} = u_t g^u_{t,s}, \quad g^u_{s,s}=e
\]
with the initial condition that $g^u_{t,s}$ is the identity $e$ at time $t=s$. Since the time $t=0$ will play a special role, we denote $g^u_t := g^u_{t,0}$. Standard results for differential equations show the following properties
\[
g_{t,s} g_{s,r} = g_{t,r}\,, \quad g_{t,s}= g_t g_s^{-1}, \quad g^{-1}_{t,s} = g_{s,t}
\]
which we will use in our calculations.

Following the motivation discussed in Section \ref{subsec-Motivation} we define the abstract version of the cost functional \eqref{eq-mot-cost} as
\begin{equation}
\label{eq-abfrm-cost}
E(u_t):=\int_0^1\ell(u_t)dt+\frac{1}{2\sigma^2}\left\|g^u_1 I_0-I_1 \right\|^2_V
\end{equation}
where the function $\ell:\mathfrak{g} \to \mathbb R$ is a Lagrangian measuring the \emph{kinetic energy} contained in $u_t$ and $\lVert\,\cdot\,\rVert$ is the norm on $V$ induced by the inner product $\langle\,\cdot\,,\,\cdot\,\rangle$. Note that formula \eqref{eq-abfrm-cost} defines a matching problem for any data structure living in a vector space $V$ and any group of deformations $G$ acting on $V$. Although it was inspired by the concrete problem of diffeomorphically matching scalar-valued images, the cost function \eqref{eq-abfrm-cost} no longer contains any reference to image matching.

Next, we want to deduce \eqref{eq-mot-grad-abstr} in our abstract framework. In order to compute the derivative $DE(u_t)$ we need to know how $g_1^u$ behaves under variations $\delta u_t$ of $u_t$. This is answered by the following lemma, the proof of which is adapted from \citet{Vialard2009} and \citet{Begetal2005}.

\begin{lemma} 
\label{delta_u_right}
Let $u:\mathbb{R}\rightarrow\mathfrak{g}$, $t\mapsto u(t)$ be a curve in $\mathfrak{g}$ and $\varepsilon\mapsto u_\varepsilon$ a variation of this curve. Then
\[
\delta g^u_{t,s}:=\left.\frac{d}{d\varepsilon}\right|_{\varepsilon=0}g_{t,s}^{u_ \varepsilon}
=g^u_{t,s}\int_s^t\left(\operatorname{Ad}_{g_{s,r}^u}\delta u(r)\right) dr\in T_{g^u_{t,s}}G.
\]
\end{lemma}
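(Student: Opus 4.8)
The plan is to reduce the statement to a linear ODE in the Lie algebra by left-translating the variation back to the identity. Set $h_\varepsilon(t):=g^{u_\varepsilon}_{t,s}$, so that $\partial_t h_\varepsilon = u_\varepsilon(t)\,h_\varepsilon$ and $h_\varepsilon(s)=e$ for every $\varepsilon$. Assuming that $h_\varepsilon(t)$ depends smoothly on $\varepsilon$ (which for the reconstruction equation \eqref{eq-abfrm-rec}, equivalently for the flow equation $\partial_t\phi_t=u_t\circ\phi_t$ on $\Omega$, is a standard consequence of smooth dependence of solutions of ODEs on parameters), I differentiate in $\varepsilon$ at $\varepsilon=0$ and interchange $\partial_t$ with $\partial_\varepsilon$ to obtain the linearized equation
\[
\partial_t\bigl(\delta g^u_{t,s}\bigr) = \delta u(t)\, g^u_{t,s} + u(t)\,\delta g^u_{t,s},
\qquad \delta g^u_{s,s}=0,
\]
the initial condition coming from the fact that $h_\varepsilon(s)=e$ is independent of $\varepsilon$.

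Next I remove the homogeneous term. Introduce $\xi(t):=\bigl(g^u_{t,s}\bigr)^{-1}\,\delta g^u_{t,s}\in\mathfrak g$, i.e. $\delta g^u_{t,s}=g^u_{t,s}\,\xi(t)$. Differentiating this product with the Leibniz rule and using $\partial_t g^u_{t,s}=u(t)\,g^u_{t,s}$, the two terms proportional to $u(t)$ cancel against each other, leaving $g^u_{t,s}\,\dot\xi(t)=\delta u(t)\,g^u_{t,s}$, hence
\[
\dot\xi(t)=\operatorname{Ad}_{(g^u_{t,s})^{-1}}\delta u(t)=\operatorname{Ad}_{g^u_{s,t}}\delta u(t),
\]
where the last equality uses the identity $g^{-1}_{t,s}=g_{s,t}$ recorded just before the lemma. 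Since $\xi(s)=\bigl(g^u_{s,s}\bigr)^{-1}\delta g^u_{s,s}=0$, integrating from $s$ to $t$ gives $\xi(t)=\int_s^t\operatorname{Ad}_{g^u_{s,r}}\delta u(r)\,dr$, and left-translating by $g^u_{t,s}$ recovers the asserted formula for $\delta g^u_{t,s}$.

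The manipulations above are formal, and the one point that needs care is that on $G=\operatorname{Diff}(\Omega)$ left translation is only continuous, not smooth, so the symbols must be read concretely: $\delta u(t)\,g^u_{t,s}$ is the composition $\delta u(t)\circ\phi^u_{t,s}$ of a vector field with a diffeomorphism, and $\operatorname{Ad}_{g}v$ is the push-forward $g_\ast v=(Dg\circ g^{-1})\,(v\circ g^{-1})$ of the vector field $v$. The rigorous argument, following \citet{Begetal2005} and \citet{Vialard2009}, is to carry out the same steps on the flow equation $\partial_t\phi^{u_\varepsilon}_{t,s}=u_\varepsilon(t)\circ\phi^{u_\varepsilon}_{t,s}$ pointwise in $x\in\Omega$; the chain-rule terms that appear there are precisely what assembles into the adjoint actions written above. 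I expect the main obstacle to be analytic rather than algebraic: establishing enough joint regularity of $(\varepsilon,t,x)\mapsto\phi^{u_\varepsilon}_{t,s}(x)$ to justify differentiating under the $t$-derivative and interchanging $\partial_t$ with $\partial_\varepsilon$, and checking that every vector field produced along the way stays in the space $\mathcal H$ on which $\ell$ is defined, so that the integral $\int_s^t\operatorname{Ad}_{g^u_{s,r}}\delta u(r)\,dr$ is meaningful in the appropriate topology. Once that is in place, the cancellation in the second paragraph delivers the identity with no further work.
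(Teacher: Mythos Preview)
Your argument is correct and is essentially the same as the paper's: differentiate the flow equation in $\varepsilon$, left-translate by $(g^u_{t,s})^{-1}$ to get $\dot\xi(t)=\operatorname{Ad}_{g^u_{s,t}}\delta u(t)$, and integrate from $s$ to $t$. The only difference is that you add a paragraph discussing the analytic justification when $G$ is a diffeomorphism group, which the paper's proof omits.
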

\begin{proof} For all $\varepsilon$ we have
\[
\frac{d}{dt}g^{u_\varepsilon}_{t,s}=u_\varepsilon(t) g^{u_\varepsilon}_{t,s},\quad g_{s,s}^{u_\varepsilon}=e.
\]
Taking the $\varepsilon$-derivative of this equality yields the ODE
\[
\frac{d}{dt}\left(\left.\frac{d}{d\varepsilon}\right|_{\varepsilon=0}g^{u_\varepsilon}_{t,s}\right)=\delta u(t) g^u_{t,s}+u(t) \left(\left.\frac{d}{d\varepsilon}\right|_{\varepsilon=0}g^{u_\varepsilon}_{t,s}\right),
\]
and then, using the notation $\delta g^u_{t,s} :=\left.\frac{d}{d\varepsilon}\right|_{\varepsilon=0}g^{u_\varepsilon}_{t,s}$, we compute
\begin{align*}
\frac{d}{dt}\left(\left(g^u_{t,s}\right)^{-1} \delta g^u_{t,s} \right) 
&=-\left(g^u_{t,s}\right)^{-1} u(t) g^u_{t,s} \left(g^u_{t,s}\right)^{-1} \delta g^u_{t,s} 
+ \left(g^u_{t,s}\right)^{-1} \left( \delta u(t) g^u_{t,s} + u(t) \delta g^u_{t,s} \right) \\
&= g^u_{s,t} \delta u(t) g^u_{t,s} \\
&= \operatorname{Ad}_{g^u_{s,t}} \delta u(t).
\end{align*}
Now we integrate both sides from $s$ to $t$ and multiply by $g^u_{t,s}$ from the left to get
\[
\delta g^u_{t,s}=g^u_{t,s}\int_s^t\left(\operatorname{Ad}_{g_{s,r}^u}\delta u(r)\right) dr
\,,\]
as required.
\end{proof}

{
\paragraph{Notation and definitions for cotangent lifts.}
Already knowing from \eqref{eq-mot-grad-abstr} how the first derivative $DE(u_t)$ of the cost functional is going to look, we want to establish the necessary notation before we proceed with the rest of the calculation. 
\begin{itemize}
\item
The inner product on $V$ provides a way to identify $V$ with its dual. To $I \in V$ one associates the linear form $I^\flat:=\langle I, \,{\cdot}\, \rangle\in V^*$.
\item
Given an action $G$ on $V$, we define the \emph{cotangent lift action} of $G$ on $\pi\in V^\ast$ via
\[
\left\langle g\pi, I \right\rangle =\left\langle \pi, g^{-1}I \right\rangle,\quad\text{for all $I\in V$}.
\]
As mentioned earlier in Remark \ref{left-act-inv}, the inverse in this definition is necessary to make the dual action $G\times V^\ast \to V^\ast$ into a left action. 
\item
Finally we define the \emph{cotangent-lift momentum map} $\diamond: V \times V^\ast \to \mathfrak{g}^\ast$ via
\[
\left\langle I \diamond \pi, u \right\rangle = \left\langle \pi, uI \right\rangle,
\]
where $uI$ is the infinitesimal action of $\mathfrak{g}$ on $V$ defined by $uI = \partial_t \vert_{t=0} g_tI$ for a curve $g_t$ with $g_0=e$ and $\partial_t \vert_{t=0} g_t=u$. The use of the momentum map was motivated in Remark \ref{rem-MomMap}.
\end{itemize}
Now we are ready to calculate the stationarity condition $DE(u_t)=0$.
}

\begin{theorem} 
\label{delta_E_left}
Given a curve $t\mapsto u_t\in\mathfrak{g}$, we have
\begin{equation} 
DE(u_t)=0\;\Longleftrightarrow \;\frac{\delta\ell}{\delta u}(t)=-g^u_t I_0 \diamond g^u_{t,1} \pi
\,,
\label{mommap-def1}
\end{equation} 
or, equivalently
\begin{equation} 
DE(u_t)=0\;\Longleftrightarrow\; \frac{\delta\ell}{\delta u}(t)
=-\frac{1}{\sigma^2}J^0_t\diamond \left(g^u_{t,1} \left(J^0_1-J^1_1\right)^\flat\right),
\label{mommap-def2}
\end{equation} 
where the quantities $\pi$, $J_t^0$, and $J^1_t$ are defined as
\[
\pi: = \frac{1}{ \sigma^2}\left(g^u_1 I_0 - I_1\right)^\flat \in V ^\ast,\quad J^0_t = g^u_tI_0\in V,\quad J^1_t= g_{t,1}^uI_1\in V.
\]
When $G$ acts by isometries, the stationarity condition simplifies to
\[
DE(u_t)=0\;\Longleftrightarrow \;\frac{\delta\ell}{\delta u}(t)
=-\frac{1}{\sigma^2}J^0_t\diamond\left(J^0_t-J^1_t\right)^\flat.
\]
\end{theorem}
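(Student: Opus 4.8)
The plan is to compute the first variation $\left.\frac{d}{d\varepsilon}\right|_{\varepsilon=0}E(u_\varepsilon)$ directly and then read off the stationarity condition by the fundamental lemma of the calculus of variations. The kinetic term $\int_0^1\ell(u_t)\,dt$ contributes $\int_0^1\langle\frac{\delta\ell}{\delta u}(t),\delta u(t)\rangle\,dt$, with no time-boundary terms, since no integration by parts in $t$ is needed. The penalty term contributes $\frac{1}{\sigma^2}\langle g^u_1I_0-I_1,\delta(g^u_1I_0)\rangle=\langle\pi,\delta(g^u_1I_0)\rangle$, after identifying $g^u_1I_0-I_1$ with its image under $\flat$ and recalling $\pi=\frac{1}{\sigma^2}(g^u_1I_0-I_1)^\flat$. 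The crux is to rewrite $\delta(g^u_1I_0)$: Lemma~\ref{delta_u_right} with $s=0$, $t=1$ gives $\delta g^u_1=g^u_1\int_0^1\operatorname{Ad}_{g^u_{0,r}}\delta u(r)\,dr$, and since the $G$-action on $V$ is linear and satisfies $(gh)I=g(hI)$, this yields $\delta(g^u_1I_0)=g^u_1\!\left(\int_0^1\left(\operatorname{Ad}_{g^u_{0,r}}\delta u(r)\right)I_0\,dr\right)$, the inner expression being the infinitesimal action of $\mathfrak g$ on $V$.

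Next I would move the variation onto the dual side in three steps: use the cotangent-lift action $\langle\pi,gJ\rangle=\langle g^{-1}\pi,J\rangle$ to transfer $g^u_1$ onto $\pi$; use the momentum-map definition $\langle\pi,\xi I\rangle=\langle I\diamond\pi,\xi\rangle$ to convert the infinitesimal action into a pairing against $\operatorname{Ad}_{g^u_{0,r}}\delta u(r)$; and use $\langle\mu,\operatorname{Ad}_g\xi\rangle=\langle\operatorname{Ad}^*_g\mu,\xi\rangle$ to free $\delta u(r)$. The result is $\left.\frac{d}{d\varepsilon}\right|_{\varepsilon=0}E(u_\varepsilon)=\int_0^1\left\langle\frac{\delta\ell}{\delta u}(t)+\operatorname{Ad}^*_{g^u_{0,t}}\!\big(I_0\diamond((g^u_1)^{-1}\pi)\big),\,\delta u(t)\right\rangle dt$, so $DE(u_t)=0$ iff $\frac{\delta\ell}{\delta u}(t)=-\operatorname{Ad}^*_{g^u_{0,t}}\big(I_0\diamond((g^u_1)^{-1}\pi)\big)$.

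To bring this to the form \eqref{mommap-def1} I would establish the equivariance of the cotangent-lift momentum map, $\operatorname{Ad}^*_g(I\diamond\pi)=(g^{-1}I)\diamond(g^{-1}\pi)$, which follows from the infinitesimal equivariance $(\operatorname{Ad}_g\xi)I=g\big(\xi(g^{-1}I)\big)$ together with the definitions of the dual action and of $\diamond$. Applying it with $g=g^u_{0,t}$ and using the composition identities $g_{t,s}g_{s,r}=g_{t,r}$, $g_{t,s}=g_tg_s^{-1}$, $g^{-1}_{t,s}=g_{s,t}$ turns $(g^u_{0,t})^{-1}I_0$ into $g^u_tI_0=J^0_t$ and $(g^u_{0,t})^{-1}(g^u_1)^{-1}\pi$ into $g^u_{t,1}\pi$, which is exactly \eqref{mommap-def1}. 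Formula \eqref{mommap-def2} is then immediate upon substituting $\pi=\frac{1}{\sigma^2}(J^0_1-J^1_1)^\flat$, noting $J^1_1=g^u_{1,1}I_1=I_1$. For the isometric case, when $G$ acts by isometries the flat map is equivariant, $g(I^\flat)=(gI)^\flat$ (because $\langle I,g^{-1}J\rangle=\langle gI,J\rangle$), so $g^u_{t,1}\pi=\frac{1}{\sigma^2}\big(g^u_{t,1}(J^0_1-J^1_1)\big)^\flat=\frac{1}{\sigma^2}(J^0_t-J^1_t)^\flat$, using $g^u_{t,1}J^0_1=g^u_tI_0=J^0_t$ and $g^u_{t,1}J^1_1=g^u_{t,1}I_1=J^1_t$.

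I expect the main obstacle to be the bookkeeping: keeping left versus right actions straight, tracking the inverses built into the cotangent-lift and duality conventions, and applying the $g_{t,s}$ composition relations correctly. The one genuinely structural (as opposed to purely mechanical) ingredient is the momentum-map equivariance identity $\operatorname{Ad}^*_g(I\diamond\pi)=(g^{-1}I)\diamond(g^{-1}\pi)$; once that is in hand, the theorem follows by a chain of substitutions.
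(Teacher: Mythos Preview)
Your proposal is correct and follows essentially the same route as the paper: vary the two terms of $E$, apply Lemma~\ref{delta_u_right} to express $\delta g^u_1$, shift $g^u_1$ to the dual side, invoke the definition of $\diamond$, and dualize $\operatorname{Ad}$ to isolate $\delta u(t)$, arriving at $\frac{\delta\ell}{\delta u}(t)=-\operatorname{Ad}^*_{g^u_{0,t}}\big(I_0\diamond(g^u_1)^{-1}\pi\big)$ before using equivariance of the cotangent-lift momentum map to rewrite it as \eqref{mommap-def1}. The only difference is expository: you make the equivariance identity $\operatorname{Ad}^*_g(I\diamond\pi)=(g^{-1}I)\diamond(g^{-1}\pi)$ and the $g_{t,s}$ bookkeeping explicit, whereas the paper passes over that step in one line (and only names the $\operatorname{Ad}^*$-equivariance later, in the proof of Lemma~\ref{EP}).
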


\noindent The quantity $J^0_t$ is the template object moved \emph{forward} by $g_t$ until time $t$ and $J^1_t$ is the target object moved \emph{backward} in time from $1$ to $t$.

\begin{proof}
Using the notation $\pi: = \frac{1}{ \sigma^2}(g_1^u I_0 - I_1)^\flat = \frac{1}{\sigma^2}(J^0_1-J^1_1)^\flat\in V ^\ast$, we may calculate
\begin{align*}
\left\langle DE(u),\delta u\right\rangle&=\delta \left(\int_0^1\ell(u(t))dt+\frac{1}{2\sigma^2}\left\|g^u_1 I_0-I_1\right\|^2_V\right)\\
&=\int_0^1\left\langle \frac{\delta \ell}{\delta u}(t),\delta u(t)\right\rangle dt
+\left\langle \pi,\left.\frac{d}{d\varepsilon}\right|_{\varepsilon=0}\left(g^{u_\varepsilon}_1 I_0-I_1 \right)\right\rangle\\
&=\int_0^1\left\langle \frac{\delta \ell}{\delta u}(t),\delta u(t)\right\rangle dt+\left\langle \pi, \delta g^u_1 I_0\right\rangle\\
&=\int_0^1\left\langle \frac{\delta \ell}{\delta u}(t),\delta u(t)\right\rangle dt+\left\langle \pi, \left(g^u_1\int_0^1\left(\operatorname{Ad}_{g_{0,s}^u}\delta u(s)\right)ds\right)I_0\right\rangle\\
&=\int_0^1\left( \left\langle \frac{\delta \ell}{\delta u}(t),\delta u(t) \right\rangle dt +\left\langle \left(g^u_1\right)^{-1}\pi,\left(\operatorname{Ad}_{g_{0,t}^u}\delta u(t)\right)I_0\right\rangle \right)dt \\
&=\int_0^1 \left(\left\langle \frac{\delta \ell}{\delta u}(t),\delta u(t)\right\rangle +\left\langle I_0\diamond\left(g^u_1\right)^{-1}\pi,\operatorname{Ad}_{g_{0,t}^u}\delta u(t)\right\rangle \right) dt\\
&=\int_0^1\left(\left\langle \frac{\delta \ell}{\delta u}(t)+\operatorname{Ad}^*_{g_{0,t}^u} \left(I_0\diamond\left(g^u_1\right)^{-1}\pi\right),\delta u(t)\right\rangle\right)dt,
\end{align*}
which must hold for all variations $\delta u(t)$. Therefore,
\begin{align*}
\frac{\delta \ell}{\delta u}(t)&=-\operatorname{Ad}^*_{g_{0,t}^u} \left(I_0\diamond\left(g^u_1\right)^{-1}\pi\right)\\
&=-\,g^u_t I_0 \diamond g^u_{t,1} \pi\\
&=-\,\frac{1}{\sigma^2}J^0_t\diamond g^u_{t,1} \left(J^0_1-J^1_1\right)^\flat.
\end{align*}
If $G$ acts by isometries, then the action commutes with the flat map and we obtain
\[
\frac{\delta \ell}{\delta u}(t)
=-\frac{1}{\sigma^2}J^0_t\diamond\left(J^0_t-J^1_t\right)^\flat.
\]
{
The last expression involving diamond is the cotangent-lift momentum map $\diamond : V\times V^* \to \mathfrak{g}^*$ associated to the given representation of the Lie group $G$ on the vector space $V$. 
}
\end{proof}
This theorem tells us how to compute the gradient of the cost functional for any data structure and any group action. Just like the cost functional \eqref{eq-abfrm-cost} it is expressed entirely in geometric terms and contains no reference to particular examples such as images. This makes the theorem widely applicable.

\begin{remark}\normalfont
Although the momentum $\frac{\delta \ell}{\delta u}(t)$ at each time depends on $I_0$ and $I_1$, it turns out that $\frac{\delta \ell}{\delta u}(t)$ obeys a dynamical equation that is independent of $I_0$, $I_1$. The equation in question is the \textit{Euler-Poincar\'e equation} on $G$. History and applications of the Euler-Poincar\'e equation can be found in \citet{HoMaRa1998}, \citet{MaRa1999} and \citet{Marsden1983}.
\end{remark}

\begin{lemma}\label{EP}
The momentum $\frac{\delta \ell}{\delta u}(t)$ satisfies
\begin{equation}
\frac{d}{dt} \frac{\delta \ell}{\delta u}(t) = -\operatorname{ad}^\ast_{u_t} \frac{\delta \ell}{\delta u}(t)\,.
\label{EP-eqn-left}
\end{equation}
This is the Euler-Poincar\'e equation on the Lie group $G$ with Lagrangian $\ell: TG/G\simeq\mathfrak{g} \to \mathbb{R}$.
\end{lemma}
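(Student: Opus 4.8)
The plan is to use the explicit form of the momentum that Theorem~\ref{delta_E_left} produces along a curve satisfying $DE(u_t)=0$, namely
\[
\frac{\delta\ell}{\delta u}(t)=-\operatorname{Ad}^\ast_{g^u_{0,t}}\mu_0,\qquad \mu_0:=I_0\diamond\bigl(g^u_1\bigr)^{-1}\pi,
\]
and to observe that $\mu_0$ is independent of $t$: it is assembled only from the fixed data $I_0,I_1,\sigma$ and the endpoint $g^u_1$. Once this is in hand, \eqref{EP-eqn-left} is the classical statement that a curve of the shape $t\mapsto\operatorname{Ad}^\ast_{g^u_{0,t}}\mu_0$ --- a coadjoint-orbit motion --- solves the coadjoint equation, and one could simply cite this (e.g.\ \citet{MaRa1999}, \citet{HoMaRa1998}); I would nonetheless include the short verification.

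First I would pair $\frac{\delta\ell}{\delta u}(t)$ against an arbitrary \emph{fixed} $\xi\in\mathfrak{g}$, reducing everything to differentiating $t\mapsto\operatorname{Ad}_{g^u_{0,t}}\xi$. Next, from $g^u_{0,t}=(g^u_t)^{-1}$ together with $\partial_t g^u_t=u_tg^u_t$ one gets $\partial_t g^u_{0,t}=-g^u_{0,t}u_t$, so the left logarithmic derivative of $g^u_{0,t}$ is $-u_t$; substituting this into the standard identity $\tfrac{d}{dt}\operatorname{Ad}_{h(t)}\xi=\operatorname{Ad}_{h(t)}[\,h(t)^{-1}\dot h(t),\xi\,]$ gives $\tfrac{d}{dt}\operatorname{Ad}_{g^u_{0,t}}\xi=-\operatorname{Ad}_{g^u_{0,t}}\operatorname{ad}_{u_t}\xi$. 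Dualizing,
\[
\frac{d}{dt}\Bigl\langle\tfrac{\delta\ell}{\delta u}(t),\xi\Bigr\rangle
=\bigl\langle\mu_0,\operatorname{Ad}_{g^u_{0,t}}\operatorname{ad}_{u_t}\xi\bigr\rangle
=\Bigl\langle\operatorname{Ad}^\ast_{g^u_{0,t}}\mu_0,\operatorname{ad}_{u_t}\xi\Bigr\rangle
=-\Bigl\langle\tfrac{\delta\ell}{\delta u}(t),\operatorname{ad}_{u_t}\xi\Bigr\rangle
=-\Bigl\langle\operatorname{ad}^\ast_{u_t}\tfrac{\delta\ell}{\delta u}(t),\xi\Bigr\rangle,
\]
and since $\xi$ was arbitrary this is exactly \eqref{EP-eqn-left}. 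Equivalently, one may differentiate the form $\frac{\delta\ell}{\delta u}(t)=-g^u_tI_0\diamond g^u_{t,1}\pi$ directly, using the equivariance of the cotangent-lift momentum map, $(gI)\diamond(g\pi)=\operatorname{Ad}^\ast_{g^{-1}}(I\diamond\pi)$, and $g^u_{t,1}=g^u_t(g^u_1)^{-1}$; this regenerates the $\operatorname{Ad}^\ast$-form above.

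There is no real obstacle here: the content is bookkeeping. Care is needed only with conventions --- the velocity $u_t=(\partial_tg_t)g_t^{-1}$ is right-invariant (hence the minus sign in \eqref{EP-eqn-left}, as opposed to the $+\operatorname{ad}^\ast$ of the left-invariant Euler--Poincar\'e equation), the running index of $g^u_{t,s}$ must be tracked when differentiating $g^u_{0,t}$, and $\operatorname{Ad}^\ast,\operatorname{ad}^\ast$ are the honest linear duals of $\operatorname{Ad},\operatorname{ad}$, the convention already fixed in the proof of Theorem~\ref{delta_E_left}. As throughout, the computations with $\operatorname{Diff}(\Omega)$ are formal in the sense of infinite-dimensional Lie theory.
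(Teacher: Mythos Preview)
Your proposal is correct and follows essentially the same route as the paper: both arguments recognize from Theorem~\ref{delta_E_left} that $\frac{\delta\ell}{\delta u}(t)=-\operatorname{Ad}^*_{(g^u_t)^{-1}}\mu_0$ with $\mu_0=I_0\diamond(g^u_1)^{-1}\pi$ independent of $t$, and then differentiate using the standard derivative formula for $\operatorname{Ad}^*$. The only cosmetic difference is that the paper quotes the rule $\partial_t\operatorname{Ad}^*_{g_t^{-1}}\eta=-\operatorname{ad}^*_{\dot g_tg_t^{-1}}\operatorname{Ad}^*_{g_t^{-1}}\eta$ directly, whereas you verify it by pairing against a fixed $\xi$ and differentiating $\operatorname{Ad}_{g^u_{0,t}}\xi$.
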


\begin{proof}
Because the cotangent-lift momentum map is $\operatorname{Ad}^\ast$-invariant we obtain from Theorem \ref{delta_E_left}
\begin{align*}
\frac{\delta \ell}{\delta u}(t) &= -g^u_t I_0 \diamond g^u_{t,1} \pi\\
&= -\operatorname{Ad}^*_{(g_t^u)^{-1}} \left(I_0\diamond\left(g^u_1\right)^{-1}\pi\right).
\end{align*}
Differentiation of $\operatorname{Ad}^\ast$ follows the rules
\begin{align*}
\partial_t \operatorname{Ad}^\ast_{g_t} \eta &= \operatorname{Ad}^\ast_{g_t} \operatorname{ad}^\ast_{\dot g_t g_t^{-1}} \eta
\,, \\
\partial_t \operatorname{Ad}^\ast_{g_t^{-1}} \eta &= -\operatorname{ad}^\ast_{\dot g_t g_t^{-1}} \operatorname{Ad}^\ast_{g_t} \eta
\,.
\end{align*}
From this we see that
\begin{align*}
\frac{d}{dt} \frac{\delta \ell}{\delta u}(t) &= - \frac{d}{dt} \operatorname{Ad}^*_{(g_t^u)^{-1}} \left(I_0\diamond\left(g^u_1\right)^{-1}\pi\right) \\
&= \operatorname{ad}^\ast_{u_t} \operatorname{Ad}^\ast_{g_t} \left(I_0\diamond\left(g^u_1\right)^{-1}\pi\right) \\
&= -\operatorname{ad}^\ast_{u_t} \frac{\delta \ell}{\delta u}(t)
\,,
\end{align*}
and so the momentum satisfies the Euler-Poincar\'e equation.
\end{proof}

\begin{remark}[EPDiff equation]\normalfont
When $G={\rm Diff}(M)$ the  Euler-Poincar\'e equation is the EPDiff equation for left action of the diffeomorphisms on the manifold $M$,
\begin{equation}
\frac{d}{dt} \frac{\delta \ell}{\delta u}(t) 
=
-\operatorname{ad}^\ast_{u_t} \frac{\delta \ell}{\delta u}(t)
\,.
\label{EPDiff-eqn-left}
\end{equation}
See \citet{HoMa2004} for a detailed treatment of the EPDiff equation and see \citet{YoArMi2009} for interesting discussions of its various usages in computational anatomy.
\end{remark}

\begin{remark}[Dependence of $I_0, I_1$ on the initial momentum]
\normalfont
It might seem counterintuitive that the momentum evolves independently of the objects we are trying to match. However, the objects $I_0, I_1$ do influence the momentum $\frac{\delta \ell}{\delta u}(t)$ in a significant way. Namely, solving the Euler-Poincar\'e equations requires that we know the initial momentum $\frac{\delta \ell}{\delta u}(0)$ and this initial momentum depends on $I_0$, $I_1$ through the formula
\[
\frac{\delta \ell}{\delta u}(0) = -I_0 \diamond (g^u_1)^{-1} \pi
\,.
\]

Alternatively, we might think of it from the viewpoint of the variational principle. Assume that $\ell(u) = \frac{1}{2} \lvert u\rvert^2$ is the squared length of a vector for some inner product $\langle\,\cdot\,,\,\cdot\,\rangle$ on $\mathfrak g$. If we have found a vector field $u_t$ and $g_1$, which minimize
\[
\frac 1 2 \int_0^1 \lvert u\rvert^2 dt + \frac {1}{2\sigma^2} \lVert g_1I_0 - I_1 \rVert^2_V
\,,
\]
then the vector field $u_t$ must also minimize
\[
\int_0^1 \lvert u\rvert^2 dt
\,,
\]
among all vector fields $\widetilde u_t$ whose flows $\widetilde g_t$ coincide with $g_t$ at time $t=1$, i.e., $\widetilde g_1 = g_1$. But this means that $u_t$ must be the velocity vector field of a geodesic $g_t$ in $G$. Here we have implicitly endowed $G$ with a right-invariant Riemannian metric induced by the inner product $\langle\,\cdot\,,\,\cdot\,\rangle$ on $\mathfrak g$. The Euler-Poincar\'e equation (\ref{EPDiff-eqn-left}) is just the geodesic equation on the Lie group $G$ with respect to this Riemannian metric.
\end{remark}

\section{Registration Using the Group of Diffeomorphisms} 
\label{sec-GeomRegDiff}

\subsection{The Setting} 
In computational anatomy the group of deformations $G$ is usually the group of diffeomorphisms of some domain $\Omega \subset \mathbb R^d$. Different types of data used in computational anatomy, such as landmarks, scalar-valued images or vector fields, are deformed by diffeomorphisms via the mathematical operations of pull-back and push-forward. Intuitively this corresponds to embedding your data into the domain $\Omega$, then deforming $\Omega$ by the diffeomorphism and observing how the data is deformed with it. We will go into greater detail about how each of the data types can be registered after reviewing some basic notions about the diffeomorphism group.

\subsubsection*{Diffeomorphism group}
For technical reasons, we need to consider a group of diffeomorphisms associated to a certain Hilbert space of vector fields $\mathcal{H}$. We suppose that $\mathcal{H}$ is a subspace of the space of $C^1$ vector fields vanishing at the boundary and at infinity, and such that there exists a constant $C$ for which
\begin{equation}\label{hypothesis_on_H}
|u|_{1,\infty}\leq C|u|_\mathcal{H},
\end{equation}
where $|\cdot|_\mathcal{H}$ is the inner product norm of the Hilbert space $\mathcal{H}$ and $|\cdot|$ is the norm in $W^{1,\infty}(\Omega)$. Such a Hilbert space defines a unique Kernel $K:\Omega\times\Omega\rightarrow L(\mathbb{R}^d,\mathbb{R}^d)$ such that
\[
\langle u, p \rangle_{L^2} = \left\langle u, \int K(\cdot ,y)p(y) dy \right\rangle_\mathcal{H}.
\]
This also defines a positive, self-adjoint differential operator $L$ (with respect to the $L^2$-inner product) such that $\left\langle u,v\right\rangle_\mathcal{H}=\left\langle u,Lv\right\rangle_{L^2}$.

If $u_t: [0,1] \to \mathcal{H}$ is a time-dependent vector field in $L^1([0,1],\mathcal{H})$, then following \citet{Younes2010} and \citet{Vialard2009}, we can consider the solution $\phi_t$ of the differential equation
\begin{equation}
\label{eq_def_G_H}
\partial_t \phi_t(x) = u_t \circ \phi_t(x), \quad \phi_0(x) = x,
\end{equation}
and the group
\begin{equation}\label{defintion_G_H}
G_{\mathcal{H}} = \left\{ \phi_1\mid \phi_t \textrm{ is solution of \eqref{eq_def_G_H} for some } u_t\in L^1([0,1],\mathcal{H}) \right\}.
\end{equation}
We shall quickly indicate why $G_\mathcal{H}$ is a group, following \citet{Trouve1995}. Let $\phi^u_1$ and $\phi^v_1$ be the flows at time $t=1$ of the vector fields $u_t$ and $v_t$. Let $\tilde u_t:=-u_{1-t}$. Then we have the relation
\[
\phi^{\tilde u}_t\circ\phi^u_1=\phi^u_{1-t}
\,,
\]
since $\phi^v_t\circ\phi^u_1(x)$ and $\phi^u_{1-t}(x)$ are both integral curves of $\tilde u_t$ at $\phi^u_1(x)$. Taking $t=1$, we obtain $(\phi_1^u)^{-1}=\phi_1^{\tilde u}\in G_\mathcal{H}$. To prove that the composition $\phi^u_1\circ\phi^v_1$ is in $G_\mathcal{H}$, we consider the vector field
\[
(u\star v)_t:=\left\{\begin{array}{ll} 2u_{2t},\quad &\text{if $t\leq 1/2$}\\
2v_{2t-1},\quad &\text{if $t>1/2$}
\end{array}, \qquad t\in[0,1].
\right.
\]
In order to compute $\phi^{u\star v}_1$, we first solve the ODE for $t\leq 1/2$. In this case $(u\star v)_t=2u_{2t}=:\bar u_t$, therefore $\phi^{u\star v}_t=\phi^{\bar u}_t=\phi^u_{2t}$. We then consider the case when $t$ becomes larger than $1/2$. In this case $(u\star v)_t=2v_{2t-1}=:\bar v_t$ and from the situation $t\leq 1/2$, we know that at time $t=1/2$ the flow $\phi^{u\star v}_t$ takes the value $\phi^u_1$. Thus, we must have $\phi^{u\star v}_t=\phi^{\bar v}_t\circ\left(\phi_{1/2}^{\bar v}\right)^{-1}\circ\phi^u_1$. Now we observe that $\phi^{\bar v}_t\circ\left(\phi_{1/2}^{\bar v}\right)^{-1}=\phi^v_{2t-1}$, since they are both integral curves of $\bar v$ that coincide at time $t=1/2$. We thus get the formula
\[
\phi^{u\star v}_t=\phi^v_{2t-1}\circ \phi^u_1.
\]
Taking $t=1$, we get $\phi^v_1\circ\phi^u_1=\phi^{u\star v}_1\in G_\mathcal{H}$.

Even though $G_{\mathcal{H}}$ is not precisely a Lie group, it comes close enough for our purposes, with $\mathcal{H}$ acting as a substitute for the Lie algebra. We can use formal analogies with the finite dimensional case to develop applications for computational anatomy. Details about this construction can be found in \citet{Younes2010}, \citet{Trouve1995} and results about the regularity of the diffeomorphisms thus constructed are found in \citet{TrYo2005a} and in \citet{Glaunes2005}.

In the following, when we speak of the group of diffeomorphisms, we will mean the group $G_{\mathcal{H}}$.

\subsection{Example 1: Landmark Matching}\label{ex1}

The simplest kind of objects used in computational anatomy are landmarks. Landmarks are labeled collections $I=(\mathbf{x}^1,\ldots,\mathbf{x}^n)$ of points $\mathbf{x}^i \in \mathbb R^d$. Given two sets $(\mathbf{x}^1,\ldots,\mathbf{x}^n)$, $(\mathbf{y}^1,\ldots,\mathbf{y}^n)$ of landmarks, the landmark matching problem consists of minimizing the energy
\begin{equation}
\label{energy_landmarks}
E(u_t) = \frac 1 2 \int_0^1 \lvert u_t \rvert^2_\mathcal{H} dt + \frac {1}{2\sigma^2} \sum_{i=1}^n \lVert \phi_1(\mathbf{x}^i) - \mathbf{y}^i \rVert^2.
\end{equation}
Our space of deformable objects is $V=(\mathbb R^d)^n$ with the usual inner product
\[
\langle I, J \rangle = \sum_{i=1}^n \mathbf{x}^i \boldsymbol{\cdot} \mathbf{y}^i
\,,
\]
for $I=(\mathbf{x}^1,\ldots,\mathbf{x}^n)$, $J=(\mathbf{y}^1,\ldots,\mathbf{y}^n)$. The action of the diffeomorphism group $G_{\mathcal{H}}$ is by push-forward
\[
\phi\cdot I := \left(\phi(\mathbf{x}^1),\ldots,\phi(\mathbf{x}^n) \right).
\]
The corresponding cotangent-lift action on the dual space $(\mathbb R^{dn})^\ast \cong \mathbb R^{dn}$ is given by
\[
\phi\cdot J^\flat = \left( D \phi (\mathbf{x}^1)^{-T}\mathbf{y}^1, \ldots, D \phi(\mathbf{x}^n)^{-T}\mathbf{y}^n \right)
\]
and the calculation
\begin{align*}
\left\langle I \diamond J^\flat, 
u \right\rangle_{\mathcal{H}^\ast \times \mathcal{H}} 
&= \left\langle J^\flat, u I \right\rangle \\
&= \left\langle (\mathbf{y}^1,\ldots,\mathbf{y}^n), 
(u(\mathbf{x}^1),\ldots, u(\mathbf{x}^n)) \right\rangle \\
&= \sum_{i=1}^n \mathbf{y}^i \boldsymbol{\cdot} 
u(\mathbf{x}^i) \\
&= \left\langle \sum_{i=1}^n \mathbf{y}^i 
\delta_{\mathbf{x}^i}, u \right\rangle_{\mathcal{H}^\ast \times \mathcal{H}}
\end{align*}
yields the diamond operator (momentum map)
\[
(\mathbf{x}^1,\ldots,\mathbf{x}^n) \diamond (\mathbf{y}^1,\ldots,\mathbf{y}^n)^\flat = \sum_{i=1}^n \mathbf{y}^i \delta_{\mathbf{x}^i}
\]
where $\delta_{\mathbf{x}}$ is the delta-distribution defined by $\int f(\mathbf{y}) \delta_\mathbf{x}(\mathbf{y}) d\mathbf{y} = f(\mathbf{x})$ for a test function $f(\mathbf{y})$.

The condition \eqref{mommap-def2} that a minimizing vector field $u_t$ must satisfy is
\[
Lu_t=-\frac{1}{\sigma^2}\sum_{i=1}^nD\phi_{t,1}(\phi_1(\mathbf{x}^i))^{-T} (\phi_1(\mathbf{x}^i)-\mathbf{y}^i)\,\delta_{\phi_t(\mathbf{x}^i)}
\,.
\]
Consequently, the momentum $Lu_t$ is concentrated only on the points $\phi_t(\mathbf{x}^i)$. By using the Green's function $K(\mathbf{x},\mathbf{y})$ corresponding to the differential operator $L$, the minimizing condition above can be rewritten for the velocity $u_t$ as
\[
u_t(\mathbf{x}) = -\frac{1}{\sigma^2} \sum_{i=1}^n K(\mathbf{x}, \phi_t(\mathbf{x}^i))\left[D \phi_{t,1}(\phi_1(\mathbf{x}^i))^{-T}(\phi_1(\mathbf{x}^i) - \mathbf{y}^i)\right].
\]

\subsection{Example 2:  Image Matching} 
\label{subsec-imagematching}

The large deformation diffeomorphic matching framework used in 
\citet{Begetal2005} seeks to match two images $I_0, I_1$ by minimizing
\[
E(u_t) = \frac 1 2 \int_0^1 \lvert u_t \rvert^2_\mathcal{H} dt + \frac {1}{2\sigma^2} \lVert I_0 \circ \phi_1^{-1} - I_1 \rVert^2_{L^2}.
\]
This example has already been discussed in Section \ref{subsec-Motivation}. We review it here by applying the abstract formalism developed above.
In this example the space $V$ of deformable objects consists of real valued functions on $\Omega$. We endow this space with the $L^2$-inner product. The group of deformations is again the group of diffeomorphisms $G_{\mathcal{H}}$, generated by vector fields in $\mathcal{H}$. The action of $G_{\mathcal{H}}$ on $V$ is by push-forward
\[
\phi\cdot I = \phi_\ast I = I \circ \phi^{-1}
\]
for $\phi \in G_{\mathcal{H}}$ and $I \in V$. As we have seen, the dual action reads
\[
\phi\cdot \pi = \lvert \det D\phi^{-1} \rvert\,\left(\pi\circ \phi^{-1}\right)
\]
where $\lvert \det D\phi \rvert$ denotes the absolute value of the determinant of $D \phi$. The diamond map in this example is
\[
I \diamond \pi = -\pi \nabla I
\,.
\]

According to \eqref{mommap-def2}, a minimizing vector field $u_t$ must satisfy the following necessary condition
\begin{equation}
\label{beg_eq}
L u_t = \frac{1}{\sigma^2} \lvert \,\det D\phi_{t,1}^{-1} \rvert (J^0_t - J^1_t)\nabla J^0_t
\end{equation}
where $J^0_t = I_0 \circ \phi^{-1}_{t,0}$, $J^1_t = I_1 \circ \phi^{-1}_{t,1}$, and $\phi_{t,s}$ is the flow of the vector field $u_t$
\[
\partial_t \phi_{t,s} = u_t \circ \phi_{t,s}, \quad \phi_{s,s}(x) = x.
\]
Equation \eqref{beg_eq} was used in \citet{Begetal2005} in devising a gradient descent scheme to computationally find the minimizing vector field.

\subsection{Example 3: Vector Fields} \label{example-VFs}

Diffusion tensor magnetic resonance imaging measures the anisotropic diffusion of water mole\-cules in biological tissues, thus enabling us to quantify the structure of the tissue. The measurement at each voxel is a second order symmetric tensor. It was shown in \citet{Pierpaoli1996} and \citet{Scollan1998} that the alignment of the principal eigenvector of this tensor tends to coincide with the fiber orientation in brain and heart.

The fiber orientation can be described by a vector field $I: \Omega \to \mathbb R^d$ and matching two vector fields can be formulated as minimizing the energy
\begin{equation}
\label{energy_vf}
E(u_t) = \frac 1 2 \int_0^1 \lvert u_t \rvert^2_\mathcal{H} dt + \frac {1}{2\sigma^2} \lVert  D \phi_1\circ I_0\circ\phi_1^{-1} - I_1 \rVert^2_{L^2}.
\end{equation}
In this example the space of deformable objects $V$ is the vector space of vector fields in $\Omega$, the deformation group is the group of diffeomorphisms $G_{\mathcal{H}}$, generated by vector fields in $\mathcal{H}$, and $G_{\mathcal{H}}$ acts on $V$ by push forward
\[
\phi\cdot I = \phi_\ast I = D\phi \circ I \circ \phi^{-1}.
\]
The infinitesimal action of $u \in \mathcal{H}$ on $I \in V$ is given by the negative of the Jacobi-Lie bracket whose components are
\[
(uI)^i = \frac{\partial u^i}{\partial x^j}I^j - \frac{\partial I^i}{\partial x^j}u^j= -[u,I]^i.
\]
The object dual to vector fields with respect to the $L^2$-pairing are one-forms $\pi\in V^\ast = \Omega^1(\Omega)$. The diamond map is given by
\[
I \diamond \pi = - \pounds_I\pi-\operatorname{div}(I)\pi,
\]
where $\pounds_I\pi$ denotes the Lie derivative of the one-form $\pi$ along the vector field $I$. In coordinates, writing $I = I^i \frac{\partial}{\partial x^i}$ and $\pi =\pi_i dx^i$, we can write the diamond map in the form
\[
I \diamond \pi = - \left( \pi_j \frac{\partial I^j}{\partial x^i} + I^j \frac{\partial \pi_i}{\partial x^j} + \pi_i \frac{\partial I^j}{\partial x^j}\right) dx^i.
\]
{
Again, diamond denotes the momentum map $\diamond : V\times V^* \to \mathfrak{g}^*$ for images that are vector fields $I \in V=\mathfrak{X}(\Omega)$ and their duals $I^\flat:=\langle I, \,{\cdot}\, \rangle\in V^*=\mathfrak{X}^*(\Omega)\simeq \Lambda^1( \Omega )\times Dens(\Omega)$, the 1-form densities. 
}

Using these formulas, we can write the necessary condition for a vector field $u_t$ to minimize \eqref{energy_vf} as
\[
L u_t = \left(\pounds_{(\phi_t)_\ast I_0} +\operatorname{div}\left((\phi_t)_*I_0\right)\right) \left( \lvert \det D\phi_{t,1}^{-1} \rvert (\phi_{t,1})_\ast \pi \right),
\]
where
$\pi = \frac{1}{\sigma^2} \left( (\phi_1)_\ast I_0 - I_1 \right)^\flat\in V^*$. Note that because the $\flat$-map does not commute with pull backs and push forwards, i.e.
\[
\phi^\ast (\phi_\ast I)^\flat \neq I^\flat,
\]
this formula cannot be significantly simplified.

\subsection{Diffusion Tensor MRI}
\label{DT-MRI}

Instead of matching only the fiber orientations, we could also match the entire symmetric 2-tensor, as was done in \citet{Alex-etal-2001} and \citet{Cao2006}. In order to do so, we should first explain how a diffusion tensor changes under a diffeomorphism. In analogy to images and vector fields we could use the push forward by the diffeomorphism. If $T$ is a symmetric tensor-field with coordinates $T_{ij}$, i.e.
\[
T(x) = T_{ij} dx^i \otimes dx^j
\]
and $\phi \in G_{\mathcal{H}}$ a diffeomorphism, then the push-forward has the coordinate expression
\begin{equation}
\label{push_forward_action}
\phi_\ast T(x) = T_{ij}(\phi^{-1}(x)) B_k^i(x) B_l^j(x) dx^k \otimes dx^l,
\end{equation}
where and $B_k^i(x)$ is the coordinate matrix of $D\phi^{-1}(x)$.

In \citet{Alex-etal-2001} and \citet{Cao2006} a different action was used. At each point $x\in \Omega \subset \mathbb{R}^d$ the orthonormal principal-axis directions $\mathbf{e}_1(x), \mathbf{e}_2(x), \mathbf{e}_3(x)$ of the tensor $T(x)$ are computed, as well as their corresponding  eigenvalues $\lambda_1(x) 
\geq \lambda_2(x) \geq \lambda_3(x)$.  
Then $T $ can be written as $T = \lambda_1 \mathbf{e}_1 \mathbf{e}_1^T +\lambda_2 \mathbf{e}_2 \mathbf{e}_2^T + \lambda_3 \mathbf{e}_3 \mathbf{e}_3^T $.
The principal axes are each transformed separately as vector fields under the diffeomorphisms as in Section \ref{example-VFs}, then normalized and made orthogonal using the Gram-Schmidt method. The results are given as:
\begin{align*}
\mathbf{\widehat e}_1 &= \frac{\phi_*\mathbf{e}_1}{\| \phi_*\mathbf{e}_1 \|} \,,\\
\mathbf{\widehat e}_2 &= \frac{ \phi_*\mathbf{e}_2 
- \langle \,\mathbf{\widehat e}_1, \phi_*\mathbf{e}_2 \rangle \mathbf{\widehat e}_1}{\| \phi_*\mathbf{e}_2 
- \langle \,\mathbf{\widehat e}_1, \phi_*\mathbf{e}_2 \rangle \mathbf{\widehat e}_1\|} \,,\\
\mathbf{\widehat e}_3 &= \mathbf{\widehat e}_1 \times \mathbf{\widehat e}_2.
\end{align*}
In the above lines, the first principal axis $\mathbf{e}_1$ is pushed forward by $\phi$ to $\mathbf{\widehat e}_1$ parallel to  $\phi_*\mathbf{e}_1$. The second principal axis $\mathbf{e}_2$ is mapped in such a way, that $\mathbf{\widehat e}_1, \mathbf{\widehat e}_2$ span the same plane as $\phi_*\mathbf{e}_1, \phi_*\mathbf{e}_2$ and are orthogonal to each other. The last principal axis is then mapped to be orthogonal to the first two. The transformed tensor is \emph{defined} to be:
\begin{equation}
\label{other_action}
\phi\cdot T = \lambda_1\mathbf{\widehat e}_1 \mathbf{\widehat e}_1^T 
+ \lambda_2 \mathbf{\widehat e}_2 \mathbf{\widehat e}_2^T 
+ \lambda_3\mathbf{\widehat e}_3 \mathbf{\widehat e}_3^T.
\end{equation}
This means, that we transform the principal axis \emph{directions} as described above, but we do \emph{not} change the eigenvalues. The choice of this action is motivated by the particular application. In brain DT-MRI the tensor $T(x)$ describes the diffusivity of water in different directions at a position $x$. The action by diffeomorphisms describes a \emph{macroscopic} deformation of the brain, such as a change of orientation, a growing tumor or a trauma. However, the diffusivity of water is governed by the \emph{microscopic} structure of tissue, which remains unchanged under a macroscopic transformation. Therefore, one is looking for a way to transform the tensor, while keeping its eigenvalues (the principal diffusivities) unchanged.

It can be shown that $T \mapsto \phi\cdot T $ is a left action of $\operatorname{Diff}( \Omega)$ on the vector space of symmetric two-tensors.
Both of these approaches to Diffusion Tensor MRI given by the actions \eqref{push_forward_action} and \eqref{other_action} have the structure of a Lie group action and thus they may both be cast into our momentum-map framework. We leave it to future work to study the different momentum maps that arise for each of these actions and the implications that they have for matching of DT-MRIs.

\section{Registration using Semidirect Products} 
\label{sec-GeomReg-SDP}

The examples in the previous section have shown that the abstract formulation of diffeomorphic image registration using the diamond operation $(\,\diamond\,)$ provides a mathematical framework that allows us to adapt easily to accommodate different data structures. A second advantage of this framework is the ability to perform matching using different groups. The images encountered in computational anatomy may contain information on different length scales. Two images can vary in their large scale structure as well as in the fine details. In matching such images, it might be of advantage to have two groups at our disposal, one to match the large scale behavior and the other one to deal with the fine details. This is made possible in our framework by using the concept of a semidirect product, which we will review below and then apply in examples.

\subsection{Semidirect Product of Groups}

Consider a Lie group $H$ acting on $K$ from the left by homomorphisms.
\[
\left(h, k\right) \in H \times K \mapsto  h \cdot k \in K,
\]
that is,
\begin{align*}
h_1 \cdot (h_2 \cdot k) &= (h_1h_2) \cdot k & &\textrm{left group action} \\
h \cdot (k_1 k_2) &= (h \cdot k_1)(h \cdot k_2) & & \textrm{action by group homomorphisms.}
\end{align*}

We can then form the semidirect product group $G = H \,\circledS\, K$. The group multiplication in $G$ is given by
\begin{equation}
g_1 g_2 = \left(h_1, k_1\right) \left(h_2, k_2 \right) = \left(h_1 h_2, k_1 \left(h_1 \cdot k_2\right) \right)
\end{equation}
and the inverse of $(h,k)$ is $(h,k)^{-1} = (h^{-1}, h^{-1} \cdot k^{-1})$. The Lie algebra $\mathfrak{g}$ is the semidirect product $\mathfrak{g} = \mathfrak{h}\, \circledS \,\mathfrak{k}$ of the Lie algebras of $H$ and $K$. The tangent actions on $G$ are given by
\begin{align}
\label{sdp_tan_mul}
(\dot h_1, \dot k_1)(h_2, k_2) &= \left(\dot h_1 h_2, \dot k_1 (h_1 \cdot k_2) + k_1 (\dot h_1 \cdot k_2) \right), \\
(h_1, k_1)(\dot h_2, \dot k_2) &= \left(h_1 \dot h_2, k_1 \cdot (h_1 \dot k_2) \right),
\end{align}
and the right-trivialization of the tangent bundle is given by
\[
\dot g g^{-1} = (\dot h, \dot k)(h^{-1}, h^{-1} \cdot k^{-1})
= \left(\dot h h^{-1}, \dot k k^{-1} + k (\dot h h^{-1} \cdot k^{-1}) \right)
.\]
The next lemma provides formulas for the adjoint and coadjoint actions of $H \,\circledS\, K$ on itself and its Lie algebra.

\begin{lemma}
\label{ad_sdp_left}{\bf [Adjoint and coadjoint actions]}$\,$

We have the following formulas for the adjoint and coadjoint actions
\begin{align}
\label{Ad_sdp_left}
\operatorname{Ad}_{(h,k)} (w,v) &= \left( \operatorname{Ad}_h v, \operatorname{Ad}_k (h \cdot w) + k (\operatorname{Ad}_h v \cdot k^{-1}) \right) \\
\label{coAd_sdp_left}
\operatorname{Ad}^\ast_{(h,k)} (\mu, \nu) &= \left( \operatorname{Ad}^\ast_{h} (\mu + \mathbf{J}(k^{-1} \nu)), h^{-1} \cdot \operatorname{Ad}^\ast_{k} \nu \right) \\
\label{ad_sdp}
\operatorname{ad}_{(v_1,w_1)} (v_2,w_2) &= \left( \operatorname{ad}_{v_1} v_2, \operatorname{ad}_{w_1} w_2+v_1 \cdot w_2 - v_2 \cdot w_1 \right) \\
\label{coad_sdp}
\operatorname{ad}^\ast_{(v_1, w_1)} (\mu, \nu) &= \left( \operatorname{ad}^\ast_{v_1} \mu - w_1 \diamond \nu, \operatorname{ad}^\ast_{w_1} \nu - v_1 \cdot \nu \right),
\end{align}
where $\mathbf{J}: T^\ast K \to \mathfrak{h}^\ast$ is the cotangent lift momentum map associated to the action of $H$ on $K$
\[
\left\langle \mathbf{J}(\alpha_k), v \right\rangle = \left\langle \alpha_k, v \cdot k \right\rangle
\,,
\]
and $\diamond: \mathfrak{k} \times \mathfrak{k}^\ast \to \mathfrak{h}^\ast$ is the cotangent lift momentum map associated to the induced representation of $H$ on $\mathfrak{k}$
\[
\left\langle w \diamond\nu, v \right\rangle := \left\langle \nu, v \cdot w \right\rangle.
\]
The action $(v, w) \in \mathfrak{h} \times \mathfrak{k} \mapsto v \cdot w \in \mathfrak{k}$ is defined as $v \cdot w = \partial_t \vert_{t=0} (h(t) \cdot w)$ for a curve $h(t)$ with $h(0)=e$ and $\partial_t \vert_{t=0} h(t) = v$.
\end{lemma}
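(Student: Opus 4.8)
The plan is to derive all four identities from a single computation --- that of $\operatorname{Ad}_{(h,k)}$ --- together with the definitions of $\mathbf{J}$ and $\diamond$ already recorded in the statement; the two coadjoint formulas are then obtained by straightforward dualization. This is the semidirect-product bookkeeping that appears classically in reduction theory, so no genuinely new idea is needed, only care with the several actions in play.

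First I would compute $\operatorname{Ad}_{(h,k)}$ by writing it as $TR_{(h,k)^{-1}}\circ TL_{(h,k)}$ on $T_eG\cong\mathfrak{h}\oplus\mathfrak{k}$ and pushing the tangent-multiplication formulas \eqref{sdp_tan_mul} through this composition (equivalently, conjugate a curve $(h_1(t),k_1(t))$ through the identity with velocity $(v,w)$ by $(h,k)$ and differentiate at $t=0$). The $\mathfrak{h}$-slot is immediate, $\frac{d}{dt}\big|_0 h h_1(t) h^{-1}=\operatorname{Ad}_h v$. The $\mathfrak{k}$-slot is the derivative at $t=0$ of $k\,(h\cdot k_1(t))\,\big((h h_1(t) h^{-1})\cdot k^{-1}\big)$; since at $t=0$ the three factors are $k$, $e$, $k^{-1}$, the Leibniz rule in $K$ yields $\operatorname{Ad}_k(h\cdot w)$ from the middle factor and $k\big(\operatorname{Ad}_h v\cdot k^{-1}\big)$ from the last, where $h\cdot w$ is the induced $H$-representation on $\mathfrak{k}$ and $\operatorname{Ad}_h v\cdot k^{-1}$ is the infinitesimal $\mathfrak{h}$-action on the point $k^{-1}\in K$. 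This is exactly \eqref{Ad_sdp_left}. For the Lie-algebra bracket \eqref{ad_sdp} I would differentiate $\operatorname{Ad}_{(h(t),k(t))}(v_2,w_2)$ along a curve with $\dot h(0)=v_1$, $\dot k(0)=w_1$; equivalently, \eqref{ad_sdp} is just the bracket $\big([v_1,v_2],\,[w_1,w_2]+v_1\cdot w_2-v_2\cdot w_1\big)$ of $\mathfrak{h}\,\circledS\,\mathfrak{k}$, with the signs of the two cross terms pinned down by linearizing the second slot of \eqref{Ad_sdp_left} once more.

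The coadjoint formulas are then pure duality. For \eqref{coAd_sdp} I pair $\operatorname{Ad}^\ast_{(h,k)}(\mu,\nu)$ with an arbitrary $(v,w)$, insert \eqref{Ad_sdp_left}, and split into $\mathfrak{h}$- and $\mathfrak{k}$-parts: $\langle\mu,\operatorname{Ad}_h v\rangle$ yields $\operatorname{Ad}^\ast_h\mu$; $\langle\nu,\operatorname{Ad}_k(h\cdot w)\rangle=\langle\operatorname{Ad}^\ast_k\nu,h\cdot w\rangle$ yields, after transferring the $H$-action onto $\mathfrak{k}^\ast$, the second slot $h^{-1}\cdot\operatorname{Ad}^\ast_k\nu$; and the remaining cross term, once the left translation is used to regard it as a pairing on $T^\ast_{k^{-1}}K$ against the infinitesimal $\mathfrak{h}$-action at $k^{-1}$, is by the definition of $\mathbf{J}$ equal to $\langle\mathbf{J}(k^{-1}\nu),\operatorname{Ad}_h v\rangle=\langle\operatorname{Ad}^\ast_h\mathbf{J}(k^{-1}\nu),v\rangle$. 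Collecting the coefficients of $v$ and $w$ gives \eqref{coAd_sdp}. For \eqref{coad_sdp} the same procedure applied to \eqref{ad_sdp} produces $\operatorname{ad}^\ast_{v_1}\mu$ and $\operatorname{ad}^\ast_{w_1}\nu$ from the first two terms; the term $-\langle\nu,v_2\cdot w_1\rangle$ is $-\langle w_1\diamond\nu,v_2\rangle$ by the definition of $\diamond$, contributing $-w_1\diamond\nu$ to the $\mathfrak{h}^\ast$-slot; and $\langle\nu,v_1\cdot w_2\rangle=-\langle v_1\cdot\nu,w_2\rangle$ for the dual $\mathfrak{h}$-action on $\mathfrak{k}^\ast$, contributing $-v_1\cdot\nu$ to the $\mathfrak{k}^\ast$-slot.

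The differentiations themselves are routine; the real work is the bookkeeping of the four distinct actions --- $H$ on $K$, $K$ on itself, the induced representation of $H$ on $\mathfrak{k}$, and its dual on $\mathfrak{k}^\ast$ --- together with the translations $TL_k$ and $TR_{k^{-1}}$ that move between $T_kK$, $T_{k^{-1}}K$ and $\mathfrak{k}$. In particular one must fix once and for all the convention $\langle v\cdot\nu,w\rangle=-\langle\nu,v\cdot w\rangle$ for the $\mathfrak{h}$-action on $\mathfrak{k}^\ast$ (so that it is a genuine representation) and verify that the object fed to $\mathbf{J}$ in \eqref{coAd_sdp} is precisely the cotangent translate $k^{-1}\nu\in T^\ast_{k^{-1}}K$; getting these conventions mutually consistent is exactly what makes the momentum maps $\mathbf{J}$ and $\diamond$ emerge cleanly rather than as unrecognized cross terms.
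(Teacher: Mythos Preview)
Your proposal is correct and follows essentially the same route as the paper: compute $\operatorname{Ad}_{(h,k)}$ by conjugating and using the semidirect-product multiplication, differentiate to obtain $\operatorname{ad}$, and then dualize each by pairing against an arbitrary $(a,b)\in\mathfrak{h}\,\circledS\,\mathfrak{k}$, recognizing the cross terms as $\mathbf{J}$ and $\diamond$ respectively. The paper's proof is exactly this computation written out line by line, including the observation (which you also use implicitly) that $h\cdot e=e$ forces $v\cdot e=0$ when linearizing.
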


\begin{proof}
For the adjoint action (Ad) of the group on its Lie algebra, we simply perform the multiplications
\begin{align*}
\operatorname{Ad}_{(h,k)} (v,w)
&= (h,k)(v,w)\left(h^{-1}, h^{-1} \cdot k^{-1} \right) \\
&= \left(hv, k(h \cdot w) \right) \left(h^{-1}, h^{-1} \cdot k^{-1} \right) \\
&= \left(hvh^{-1}, k(h \cdot w)k^{-1} + k(hvh^{-1} \cdot k^{-1}) \right) \\
&= \left(\operatorname{Ad}_{h} v, \operatorname{Ad}_{k} (h\cdot w) + k (\operatorname{Ad}_{h} v \cdot k^{-1}) \right)
\end{align*}
and for the coadjoint action  (Ad$^*$) on the dual Lie algebra, we pair with $(a,b) \in \mathfrak{h} \,\circledS \,\mathfrak{k}$ to define
\begin{align*}
\left\langle \operatorname{Ad}^\ast_{(h,k)} (\mu, \nu), (a, b) \right\rangle
&= \left\langle (\mu, \nu), \operatorname{Ad}_{(h,k)}(a,b) \right\rangle \\
&= \left\langle \mu, \operatorname{Ad}_h a \right\rangle 
+ \left\langle \nu, \operatorname{Ad}_k (h \cdot b) + k (\operatorname{Ad}_h a \cdot k^{-1}) \right\rangle \\
&= \left\langle \operatorname{Ad}^\ast_h \mu, a \right\rangle 
+ \left\langle h^{-1} \cdot \operatorname{Ad}^\ast_k \nu, b \right\rangle 
+ \left\langle k^{-1} \nu, \operatorname{Ad}_h a \cdot k^{-1} \right\rangle \\
&= \left\langle \operatorname{Ad}^\ast_h \mu, a \right\rangle 
+ \left\langle h^{-1} \cdot \operatorname{Ad}^\ast_k \nu, b \right\rangle 
+ \left\langle \operatorname{Ad}^\ast_h \left(\mathbf{J}( k^{-1} \nu)\right), a \right\rangle \\
&= \left\langle \left( \operatorname{Ad}^\ast_{h} (\mu + \mathbf{J}(k^{-1} \nu)), h^{-1} \cdot \operatorname{Ad}^\ast_{k} \nu \right), (a, b) \right\rangle.
\end{align*}
For the next identity we differentiate \eqref{Ad_sdp_left} and remark that because of $h \cdot e = e$ we get $v \cdot e = 0$. Thus, the adjoint action (ad) of the Lie algebra on itself is given by
\begin{align*}
\operatorname{ad}_{(v_1,w_1)} (v_2,w_2) 
&= \left(\operatorname{ad}_{v_1} v_2, \operatorname{ad}_{w_1} w_2 + v_1 \cdot w_2 + w_1 (v_2 \cdot e) + \operatorname{ad}_{v_1} v_2 \cdot e - v_2 \cdot w_1 \right)\\
&= \left(\operatorname{ad}_{v_1} v_2, \operatorname{ad}_{w_1} w_2 + v_1 \cdot w_2 - v_2 \cdot w_1 \right).
\end{align*}
For the coadjoint action (ad$^*$) of the Lie algebra on its dual, we pair again with $(a, b) \in \mathfrak{h}\, \circledS\, \mathfrak{k}$ to see that
\begin{align*}
\left\langle \operatorname{ad}^\ast_{(v_1, w_1)}(\mu,\nu), (a,b) \right\rangle
&= \left\langle (\mu,\nu), \left(\operatorname{ad}_{v_1} a,
\operatorname{ad}_{w_1} b+ v_1 \cdot b - a \cdot w_1 \right) \right\rangle \\
&= \left\langle \operatorname{ad}^\ast_{v_1} \mu, a \right\rangle
+ \left\langle  \operatorname{ad}^\ast_{w_1} \nu-v_1 \cdot \nu, b \right\rangle
- \left\langle w_1 \diamond \nu, a \right\rangle \\
&= \left\langle \left( \operatorname{ad}^\ast_{v_1} \mu - w_1 \diamond \nu, 
\operatorname{ad}^\ast_{w_1} \nu - v_1 \cdot \nu \right), (a, b) \right\rangle
\end{align*}
as stated in the lemma.
\end{proof}

If $G = H \,\circledS\,K$, the equation
\[
\partial_t g^u_{t,s}=u_tg^u_{t,s},\quad g_{s,s}^u=e.
\]
can be written as (see \eqref{sdp_tan_mul})
\[
\partial_t \left(h^u_{t,s}, k^u_{t,s}\right) = \left(v_t h^u_{t,s}, w_t k^u_{t,s} + 
v_t \cdot k^u_{t,s} \right), \quad h^u_{s,s}=e, k^u_{s,s}=e,
\]
where $u_t= (v_t,w_t) \in   \mathfrak{h}\, \circledS\, \mathfrak{k}= 
\mathfrak{g}$ and   $g^u_{t,s} = (h^u_{t,s}, k^u_{t,s}) \in H \,\circledS\, K$.
Thus $h^u_{t,s}$ and $k^u_{t,s}$ satisfy the equations
\begin{equation}
\label{sdp_rel_k_w}
\partial_t h^u_{t,s} = v _t h^u_{t,s}, \qquad \partial_t k^u_{t,s} = w_t k^u_{t,s} + v_t \cdot k^u_{t,s}.
\end{equation}

This means that $h^u_{t,s}$ is the flow of the vector field $v_t$, but this is not true for $k^u_{t,s}$ and the vector field $w_t$. The corresponding relation for $k^u_{t,s}$ is a direct consequence of the noncommutativity of the semidirect product. After reviewing these facts about the semidirect product, we will apply them to form the semidirect product of two diffeomorphism groups and use this product to perform image registration. This is done in the next section.

\subsection{Image Matching with Semidirect Product Groups} 

Given a space $V$ of deformable objects, assume that two groups $H$, 
$K$ of deformations act on $V$ from the left. We imagine $H$ to contain 
large-scale deformations and $K$ to contain small-scale deformations. 
Since a deformation that captures small structures is also able to capture 
large-scale ones, we will assume that $H$ is a subgroup of $K$, denoted 
by $H \leq K$.

Let us determine the action by group isomorphisms of $H $ on $K $ 
subject to the following two conditions:
\begin{itemize}
\item The formula 
\begin{equation}
\label{abstract_sdp_action}
(h,k)I := khI
\end{equation} 
defines an $H \,\circledS\,K $-action
on $V$. Thus $h $ deforms $I $ first on a large scale and then the details
are captured on a small scale by $k$.
\item The $H \,\circledS\,K $ action is effective. If the action is a 
representation, this means that it is faithful. This condition requires that if
$(h,k)I = I $ for all $I \in V $, then $(h,k) $ is the identity.
\end{itemize}
The first condition implies $(h_1,k_1)(h_2,k_2)I = (h_1 h_2, k_1 
(h_1 \cdot k_2))I$ for all $h_1, h_2 \in H $, $k_1, k_2 \in K $, and $I \in V $.
Therefore, $k_1 h_1 k_2 h_2 I = k_1 (h_1 \cdot k_2) h_1 h_2 I$ for all 
$I \in V $ which, by the second condition, yields $k_1 h_1 k_2 = 
k_1 (h_1 \cdot k_2) h_1$, that is, the action is necessarily given by
conjugation $(h_1 \cdot k_2) =h_1 k_2  h_1^{-1}$. In this sense the action 
by conjugation appears naturally.

Because of the form of the action on $V $, the momentum map
of the cotangent lifted action of $H \,\circledS\,K $ on $V\times V ^\ast$ has the expression
\begin{equation}
\label{diamond_sdp}
I \diamond \pi = (I \diamond_1 \pi, I \diamond_2 \pi) \in \mathfrak{h}^\ast 
\times \mathfrak{k}^\ast \cong ( \mathfrak{h} \,\circledS\,\mathfrak{k})^\ast,
\end{equation}
where, $I \in V $, $\pi\in V ^\ast$, and $I \diamond_1 \pi$ and 
$I \diamond_2 \pi$ denote the cotangent lift momentum maps of the $H $ 
and $K $-actions on $V$, respectively. 

Since the $H $-momentum map is obtained from the $K $-momentum
map by restriction, we have
\begin{equation}
\label{restriction}
\iota^\ast(I \diamond_2 \pi) = I \diamond_1 \pi,
\end{equation}
where $I \in V$, $\pi\in V ^\ast$, $\iota: \mathfrak{h} \hookrightarrow 
\mathfrak{k}$ is the inclusion, and $\iota^\ast: \mathfrak{k}^\ast \rightarrow 
\mathfrak{h} ^{\ast}$ is its dual. 

The matching problem using a semidirect product is to minimize the energy
\begin{equation}
\label{action_left_sdp}
E(v_t, w_t) = \int_0^1 \ell(v_t, w_t) dt + \frac {1}{2\sigma^2} \left\lVert k_1 h_1 I_0-I_1 \right\rVert^2_V, 
\end{equation}
where $(h_1, k_1)$ are related to $(v_t,w_t)$ by
\begin{equation}
\label{diff_equ_h_k}
\left.
\begin{aligned}
\partial_t h_t &= v_t h^u_{t} & \quad &h_0 &= e \\
\partial_t k_t &= (v_t + w_t) k^u_{t} - k^u_{t} v_t &\quad & k_0 &= e
\end{aligned}
\right\}.
\end{equation}
The last equation is obtained by specializing \eqref{sdp_rel_k_w} for $s=0 $ and the action equal to conjugation.

\begin{theorem}
\label{delta_e_sdp}
Given a curve $t \mapsto (v_t, w_t) \in \mathfrak{h} \,\circledS\, 
\mathfrak{k}$ the stationarity condition $DE(v_t, w_t)=0$ for the action 
\eqref{action_left_sdp} is equivalent to
\[
\frac{\delta \ell}{\delta v}(t) = - \widetilde{g}_t I_0 \diamond_1  
\widetilde{g}_{t,1} \pi, \qquad 
\frac{\delta \ell}{\delta w}(t) = - \widetilde{g}_t I_0 \diamond_2 
\widetilde{g}_{t,1} \pi,
\]
where $\pi = \frac{1}{\sigma^2}(\widetilde{g}_1I_0 - I_1)^\flat$ and $
\widetilde{g}_t \in K$ is the solution of the equation
\begin{equation}
\label{g_tilde_left}
\partial_t \widetilde{g}_t = (v_t+w_t) \widetilde{g}_t, \qquad 
\widetilde{g}_0 = e.
\end{equation}
\end{theorem}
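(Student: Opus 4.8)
The plan is to reduce the semidirect--product computation to the single--group case of Theorem~\ref{delta_E_left}, applied to the group $K$ acting on $V$, by exploiting the special form $(h,k)I = khI$ of the action together with the reconstruction equations~\eqref{diff_equ_h_k}. The first move is to identify the ``effective'' deformation: set $\widetilde g_t := k_t h_t \in K$, where $(h_t,k_t)$ solve \eqref{diff_equ_h_k}. The product rule gives $\partial_t \widetilde g_t = (\partial_t k_t)h_t + k_t(\partial_t h_t) = \big((v_t+w_t)k_t - k_t v_t\big)h_t + k_t v_t h_t = (v_t+w_t)\widetilde g_t$, with $\widetilde g_0 = e$; this is exactly \eqref{g_tilde_left}, the $-k_t v_t$ term produced by the conjugation action cancelling the term coming from $k_t\,\partial_t h_t$. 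Hence the deformed template in the penalty term of \eqref{action_left_sdp} is $k_1 h_1 I_0 = \widetilde g_1 I_0$, the image of $I_0$ under the $K$-action of $\widetilde g_1$, and $\widetilde g_1$ depends on the path $(v_t,w_t)$ only through the combined velocity $\xi_t := v_t + w_t$.

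Next I differentiate $E$. The Lagrangian term contributes $\int_0^1\big(\langle \frac{\delta\ell}{\delta v}(t),\delta v(t)\rangle + \langle \frac{\delta\ell}{\delta w}(t),\delta w(t)\rangle\big)\,dt$. For the penalty term, write its variation as $\langle \pi,\delta\widetilde g_1 I_0\rangle$ with $\pi = \frac{1}{\sigma^2}(\widetilde g_1 I_0 - I_1)^\flat$, and apply Lemma~\ref{delta_u_right} to the curve $\widetilde g_t$ in $K$ driven by $\xi_t$, with variation $\delta\xi_t = \delta v_t + \delta w_t$, to get $\delta\widetilde g_1 = \widetilde g_1\int_0^1 \operatorname{Ad}_{\widetilde g_{0,s}}(\delta v(s)+\delta w(s))\,ds$. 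Now I follow the proof of Theorem~\ref{delta_E_left} verbatim: move $\widetilde g_1$ across the pairing, convert the infinitesimal $K$-action into the $K$-momentum map $\diamond_2$ via $\langle \mu, \zeta I_0\rangle = \langle I_0 \diamond_2 \mu, \zeta\rangle$, and transpose $\operatorname{Ad}_{\widetilde g_{0,s}}$ as $\operatorname{Ad}^\ast_{\widetilde g_{0,s}}$, so that the penalty variation becomes $\int_0^1 \langle \operatorname{Ad}^\ast_{\widetilde g_{0,t}}\big(I_0\diamond_2(\widetilde g_1)^{-1}\pi\big),\,\delta v(t)+\delta w(t)\rangle\,dt$.

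Then I separate the two variations. Since $\delta w(t)$ ranges over all of $\mathfrak k$, stationarity forces $\frac{\delta\ell}{\delta w}(t) = -\operatorname{Ad}^\ast_{\widetilde g_{0,t}}\big(I_0\diamond_2(\widetilde g_1)^{-1}\pi\big)$; since $\delta v(t)$ ranges only over $\mathfrak h \hookrightarrow \mathfrak k$, only the $\iota^\ast$-component of the coefficient survives, giving $\frac{\delta\ell}{\delta v}(t) = -\iota^\ast\operatorname{Ad}^\ast_{\widetilde g_{0,t}}\big(I_0\diamond_2(\widetilde g_1)^{-1}\pi\big)$. Finally I simplify using $\widetilde g_{0,t} = (\widetilde g_t)^{-1}$ and $\widetilde g_t(\widetilde g_1)^{-1} = \widetilde g_{t,1}$: by the $\operatorname{Ad}^\ast$-equivariance of the cotangent-lift momentum map (the identity already used in the proof of Lemma~\ref{EP}, namely $\operatorname{Ad}^\ast_{(\widetilde g_t)^{-1}}(I_0\diamond_2 (\widetilde g_1)^{-1}\pi) = \widetilde g_t I_0 \diamond_2 \widetilde g_{t,1}\pi$) we obtain the stated $w$-equation; applying $\iota^\ast$ together with the restriction relation \eqref{restriction}, $\iota^\ast(J\diamond_2\rho) = J\diamond_1\rho$, yields the $v$-equation. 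The converse implication is automatic since every step is an equivalence.

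The main obstacle is bookkeeping rather than conceptual: one must keep track of the fact that $\delta v$ and $\delta w$ live in different spaces, $\mathfrak h$ versus $\mathfrak k$, which is precisely what makes $\diamond_1$ appear in the $v$-equation and $\diamond_2$ in the $w$-equation through \eqref{restriction}, and one must be careful with the $\operatorname{Ad}/\operatorname{Ad}^\ast$ and inverse conventions when transporting the momentum map along $\widetilde g_t$. Granting the computation of Theorem~\ref{delta_E_left}, nothing beyond these observations is required.
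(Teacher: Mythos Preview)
Your argument is correct, but it takes a somewhat different route from the paper's proof. The paper applies Theorem~\ref{delta_E_left} directly to the full semidirect product $G = H\,\circledS\,K$ acting on $V$: that theorem yields $\frac{\delta\ell}{\delta u}(t) = -\,g^u_t I_0 \diamond g^u_{t,1}\pi$ for the $G$-momentum map $\diamond$, and then the product structure~\eqref{diamond_sdp} splits this into the pair $(\diamond_1,\diamond_2)$ with no further computation. Only afterwards does the paper introduce $\widetilde g_t = k_t h_t$ and check that $g^u_t I_0 = \widetilde g_t I_0$ (and implicitly $g^u_{t,1}\pi = \widetilde g_{t,1}\pi$), deriving the ODE~\eqref{g_tilde_left} as the last step. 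You instead invert the order: you first establish the ODE for $\widetilde g_t$, observe that the penalty depends on $(v_t,w_t)$ only through $\xi_t = v_t+w_t$, and then redo the variational computation of Theorem~\ref{delta_E_left} inside $K$, separating the $\mathfrak h$- and $\mathfrak k$-components at the end via $\iota^\ast$ and~\eqref{restriction}. Your approach is more elementary in that it bypasses the semidirect-product formulas of Lemma~\ref{ad_sdp_left} and the explicit form~\eqref{diamond_sdp} of the $G$-momentum map; the paper's approach is more structural, exhibiting the result as an immediate corollary of the general theorem rather than a parallel computation.
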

\begin{proof}
Let $g_t=(h_t,k_t)\in H \,\circledS\, K$ be the solution of the equation 
$\partial_t g _t = u _tg _t$, $g_0 = e $, where $u_t=(v_t,w_t) \in \mathfrak{h}\,\circledS\,\mathfrak{k}$ and define $\widetilde{g}_t := k_th_t \in K$.  By  Theorem \ref{delta_E_left} and \eqref{diamond_sdp} we get
\[
\frac{\delta \ell}{\delta v}= - g^u_t I_0  \diamond_1  g^u_{t,1} \pi, \qquad  \frac{\delta \ell}{\delta w} 
=  - g^u_t I_0  \diamond_2 g^u_{t,1} \pi .
\]
Since $g_t I_0 = k_t h_t I_0 = \widetilde{g}_t I_0$ by the definition of the
$H \,\circledS\,K $-action on $V $, this yields
\begin{align*}
\frac{\delta \ell}{\delta v}(t) = - \widetilde{g}_t I_0 \diamond_1  \widetilde{g}_{t,1} \pi, \qquad
\frac{\delta \ell}{\delta w}(t) = - \widetilde{g}_t I_0 \diamond_2  \widetilde{g}_{t,1} \pi.
\end{align*}
It remains to show equation \eqref{g_tilde_left}. By \eqref{diff_equ_h_k}
we have 
\begin{align*}
\partial_t \widetilde{g}_t &= (\partial_t k_t) h_t + k_t (\partial_t h_t)
= \left(v _t + w _t \right)k _t h _t - k _t v _t h _t + k _t v _th _t
= \left(v _t + w _t \right)\widetilde{g}_t
\,.
\end{align*}
We have $\widetilde{g}_0 = k_0h_0 = e $.
\end{proof}

This theorem shows that when matching with two groups, the momentum $\frac{\delta \ell}{\delta v}(t)$ contains no more information than $\frac{\delta \ell}{\delta w}(t)$, since we have
\[
\frac{\delta \ell}{\delta v}(t) = \left.\frac{\delta \ell}{\delta w}(t)\right\vert_\mathfrak{h}
\]
by \eqref{restriction}.
Nonetheless, this case \emph{differs} from matching with only one group, since the Euler-Poincar\'e equation for the semidirect product reads
\begin{equation*} 
\begin{split}
\frac{d}{dt} \frac{\delta \ell}{\delta v}(t) &= - \operatorname{ad}^\ast_{v_t} \frac{\delta \ell}{\delta v}(t) + w_t \diamond \frac{\delta \ell}{\delta w}(t) \\
\frac{d}{dt} \frac{\delta \ell}{\delta w}(t) &= -\operatorname{ad}^\ast_{w_t} \frac{\delta \ell}{\delta w}(t) + v_t \cdot \frac{\delta \ell}{\delta w}(t)
\end{split} \end{equation*}
which incorporates the actions of both groups and is genuinely different from the Euler-Poincar\'e equation for a single group, which is
\[
\frac{d}{dt} \frac{\delta \ell}{\delta u}(t) = -\operatorname{ad}^\ast_{u_t} \frac{\delta \ell}{\delta u}(t).
\]

\subsection{Example: Semidirect Product Image Matching with Two Kernels} 

One way of introducing a length scale in image matching is to choose an appropriate  kernel for the cost of the $H$-action. If we were to choose for example $Lu= u - \alpha^2 \Delta u$ to be the differential operator associated to the $H^1$-norm on $H$, then the corresponding kernel would be $K(x,y) = e^{(-\lvert x-y\rvert/\alpha)}$ where $\alpha$ is a length scale; that is, a filter width. A popular alternative choice in image registration is the smoother Gaussian kernel $K(x,y) = e^{(-\lvert x-y\rvert^2/\alpha^2) }$. Increasing the value of $\alpha$ increases the cost of forming gradients, or curvature, and thus inhibits nearby particles from being deformed differently, while allowing large-scale deformations of the image to occur. Sufficiently decreasing the value of $\alpha$ on the other hand would allow fine adjustments in the image to be made without requiring much energy cost for the velocity vector field.

Recall the setting of the example of image matching in Section \ref{subsec-imagematching}. When matching two images $I_0, I_1 \in V : = \mathcal{F}( \Omega)$ with one kernel, the optimizing vector field $u_t$ satisfies
\[
u_t = \frac{1}{\sigma^2} K \ast \left( \lvert \det D\phi_{t,1}^{-1} \rvert \nabla J^0_t (J^0_t - J^1_t) \right),
\]
where $K \ast f = \int K(\cdot ,y)f(y)dy$ denotes convolution with the kernel of the operator $L $; see \eqref{eq-mot-grad}.

A natural approach for distinguishing between multiple length scales would be to use instead the sum of two kernels
\[
\widetilde{u}_t = \frac{1}{\sigma^2} \left(K_{\alpha_1} + K_{\alpha_2}\right) \ast \left( \lvert \det D\widetilde{\phi}_{t,1}^{-1} \rvert \nabla \widetilde{J}^0_t (\widetilde{J}^0_t - \widetilde{J}^1_t) \right),
\]
with two length scales $\alpha_1$ and $\alpha_2$. We will show how this approach can be given a geometrical interpretation.

Given two kernel $K_{ \alpha_1} $ and $K_{ \alpha _2} $ that correspond 
to the two length scales $\alpha _1> \alpha_2 $, we use the diagonal Lagrangian $\ell(v, w) = \frac{1}{2} \lvert v\rvert^2_{\alpha_1} + \frac{1}{2} 
\lvert w\rvert^2_{\alpha_2}$ to measure the energy of the joint velocity vector $(v,w)$. The norm $| \cdot |_{ \alpha_i}$ is associated to the inner 
product coming from the kernel $K_{ \alpha_i}$, $i=1,2$. We assume that the associated Hilbert spaces $\mathcal{H}_{\alpha_1}\subset \mathcal{H}_{\alpha_2}$ verify the hypothesis \eqref{hypothesis_on_H}. Let 
$G_{ \alpha_1} \subset G_{ \alpha_2} $ be the groups associated to $\mathcal{H}_{\alpha_1}$, $\mathcal{H}_{\alpha_2}$ via \eqref{defintion_G_H}. The element $(\psi, \eta) \in 
G_{ \alpha_1} \,\circledS\, G_{ \alpha_2}$ acts on $V = \mathcal{F}
( \Omega) $ by the action \eqref{abstract_sdp_action}; that is, 
\[
(\psi, \eta) \cdot I : = ( \eta \circ \psi) \cdot I = I \circ ( \eta\circ \psi) ^{-1}
= I \circ \psi^{-1} \circ \eta^{-1}.
\]
The matching problem with the semidirect product group 
$G_{\alpha_1} \,\circledS\, G_{\alpha_2}$ is to minimize the energy
\[
E(v_t, w_t) = \frac{1}{2} \int_0^1  \lvert v_t\rvert^2_{\alpha_1} + \lvert w_t\rvert^2_{\alpha_2} dt + \frac {1}{2\sigma^2} \left\lVert I_0\circ \psi_1^{-1}
\circ \eta_1 ^{-1} -I_1 \right\rVert^2_{L^2}.
\]
By Theorem \ref{delta_e_sdp}, the energy is minimal if
\[
v_t = K_{\alpha_1} \ast \left( - \widetilde{\phi}_t I_0 \diamond_1 \widetilde{\phi}_{t,1} \pi \right), \qquad 
w_t = K_{\alpha_2} \ast \left( - \widetilde{\phi}_t I_0 \diamond_2  \widetilde{\phi}_{t,1} \pi \right)
\]
and
\[
\partial_t \widetilde{\phi}_t = (v_t+w_t) \circ \widetilde{\phi}_t, \qquad \phi_0 = id.
\]
The example of single kernel image matching in Section \ref{subsec-imagematching} showed us that
\[
- \widetilde{\phi}_t I_0 \diamond  \widetilde{\phi}_{t,1} \pi 
= \frac{1}{\sigma^2} \lvert\det D\widetilde{\phi}_{t,1}^{-1} \rvert \nabla \widetilde{J}^0_t (\widetilde{J}^0_t - \widetilde{J}^1_t),
\]
with $\widetilde{J}^0_t = I_0 \circ \widetilde{\phi}^{-1}_{t,0}$, $\widetilde{J}^1_t = I_1 \circ \widetilde{\phi}^{-1}_{t,1}$. By denoting 
$\widetilde{u}_t := v_t + w_t$ the velocity vector field of $\widetilde{\phi}_t$, we see that
\begin{equation}
\label{kernel-sum}
\widetilde{u}_t = \frac{1}{\sigma^2} \left(K_{\alpha_1} + K_{\alpha_2}\right) \ast \left( \lvert \det D\widetilde{\phi}_{t,1}^{-1} \rvert \nabla \widetilde{J}^0_t (\widetilde{J}^0_t - \widetilde{J}^1_t) \right).
\end{equation}
{
This computation proves the following theorem.
\begin{theorem}
Matching images with the sum of two kernels corresponds to using a semidirect product of diffeomorphism groups. 
\end{theorem}
\begin{remark} {\rm
This theorem provides a geometrical interpretation for an approach that might have been suggested intuitively and turns out to be very effective.
The sum-of-kernels strategy for registration was recently applied successfully in \cite{RiViWoHoRu2010} for measurement of the atrophy of tissues in the hippocampus due to Alzheimer's disease. }
\end{remark}
}

\section{Symmetric Formulations of Image Registration}
\label{sec-symm}

The cost functional \eqref{eq-abfrm-cost} is not the only choice possible in the large diffeomorphism matching framework. Other cost functionals have been proposed in the literature, which make the registration problem symmetric. A consequence of the choice \eqref{eq-abfrm-cost} is that it matters, whether we choose to register $I_0$ to $I_1$ or vice versa. In some applications it may be useful to distinguish conceptually between $I_0$ and $I_1$. For example, this distinction may be appropriate when the template $I_0$ is available in a higher resolution. However, in other cases one may prefer a symmetric cost functional, instead of \eqref{eq-abfrm-cost}. Such symmetric cost functionals have been proposed in \citet{Beg2007}, \citet{Avants2008} and \citet{ZaHaNi2009}. We will show how they can be analyzed geometrically, much as we did for the cost functional \eqref{eq-abfrm-cost} in Section~\ref{sec-GeomReg}.

\begin{example}
\normalfont
The approach described in \citet{Avants2008} and \citet{Beg2007} can be abstractly described in terms of the following cost functional
\[
E(u_t) = \int_0^1 \ell(u_t) dt + \frac{1}{2\sigma^2} \lVert g_{\frac{1}{2}} I_0 - g_{\frac{1}{2},1} I_1 \rVert^2_V
\]
where $g_{t,s}$ is the flow of $u_t$. Since we now evaluate the inexactness of the matching in the midpoint $t=\frac{1}{2}$ of the interval, this choice of the cost functional leads to a symmetric formulation of LDM. A calculation similar to that in the proof of Theorem \ref{delta_E_left} may be performed with $\pi = \frac{1}{\sigma^2}(g_{\frac{1}{2}} I_0 - g_{\frac{1}{2},1} I_1)^\flat$
\begin{align*}
\left\langle D E(u_t), \delta u_t \right\rangle &= \int_0^1 \left\langle \frac{\delta \ell}{\delta u}(t), \delta u(t) \right\rangle dt
+ \langle \pi, \delta g_{\frac{1}{2}} I_0 - \delta g_{\frac{1}{2},1} I_1 \rangle \\
&= \int_0^1 \left\langle \frac{\delta \ell}{\delta u}(t), \delta u(t) \right\rangle dt \\
&\phantom{=}+ \left\langle \pi, g_{\frac{1}{2}} \left( \int_0^{\frac{1}{2}} \operatorname{Ad}_{g^{-1}_t} \delta u(t) dt \right) I_0
- g_{\frac{1}{2}, 1} \left( \int_1^{\frac{1}{2}} \operatorname{Ad}_{g_{1,t}} \delta u(t) dt \right) I_1 \right\rangle \\
&= \int_0^1 \left\langle \frac{\delta \ell}{\delta u}(t), \delta u(t) \right\rangle dt \\
&\phantom{=}+ \int_0^{\frac{1}{2}} \left\langle I_0 \diamond g^{-1}_{\frac{1}{2}} \pi, \operatorname{Ad}_{g^{-1}_t} \delta u(t) \right\rangle dt
+ \int_{\frac{1}{2}}^1 \left\langle I_1 \diamond g_{1,\frac{1}{2}} \pi, \operatorname{Ad}_{g_{1,t}} \delta u(t) \right\rangle dt \\
&= \int_0^{\frac{1}{2}} \left\langle \frac{\delta \ell}{\delta u}(t) + g_t I_0 \diamond g_{t, \frac{1}{2}} \pi, \delta u(t) \right\rangle dt + \int_{\frac{1}{2}}^1 \left\langle \frac{\delta \ell}{\delta u}(t) + g_{t,1} I_1 \diamond g_{t, \frac{1}{2}} \pi, \delta u(t) \right\rangle dt
\end{align*}
This calculation shows that a minimizing vector field must satisfy
\begin{align*}
\frac{\delta \ell}{\delta u}(t) &=- g_t I_0 \diamond g_{t, \frac{1}{2}} \pi, \quad t \in[0,1/2] \\
\frac{\delta \ell}{\delta u}(t) &= -g_{t,1} I_1 \diamond g_{t, \frac{1}{2}} \pi, \quad t \in [1/2, 1] \\
\pi &= \frac{1}{\sigma^2}(g_{\frac{1}{2}} I_0 - g_{\frac{1}{2},1} I_1)^\flat
\end{align*}
This momentum map is very similar to that of Theorem \ref{eq-abfrm-cost}, except now there is a \emph{discontinuity} at time $t=1/2$.
\end{example}

\begin{example}
\normalfont
Another approach to symmetrize the registration problem was considered in \citet{Beg2007} via the cost functional
\[
E(u_t) = \int_0^1 \ell(u_t) dt + \frac{1}{2\sigma^2} \int_0^1 \lVert g_{t} I_0 - g_{t,1} I_1 \rVert^2_V dt
.\]
Instead of minimizing the matching error at some chosen time (e.g., $t=0$) for the classical LDM or $t=\frac{1}{2}$ as in the previous example, this approach  averages the error over the entire time interval. Upon using the notation $\pi_t = \frac{1}{\sigma^2}(g_{t} I_0 - g_{t,1} I_1)^\flat$ we can again calculate the derivative of $E(u_t)$
\begin{align*}
\left\langle DE(u_t), \delta u_t \right\rangle &= \int_0^1 \left\langle \frac{\delta \ell}{\delta u}(t), \delta u(t) \right\rangle dt
+ \int_0^1 \langle \pi_r, \delta g_{r} I_0 - \delta g_{r,1} I_1 \rangle dr \\
&= \int_0^1 \left\langle \frac{\delta \ell}{\delta u}(t), \delta u(t) \right\rangle dt \\
&\phantom{=}+ \int_0^1 \left\langle \pi_r, g_{r} \left( \int_0^{r} \operatorname{Ad}_{g^{-1}_t} \delta u(t) dt \right) I_0
- g_{r, 1} \left( \int_1^{r} \operatorname{Ad}_{g_{1,t}} \delta u(t) dt \right) I_1 \right\rangle \\
&= \int_0^1 \left\langle \frac{\delta \ell}{\delta u}(t), \delta u(t) \right\rangle dt \\
&\phantom{=}+ \int_0^1 \int _0^r \langle g_tI_0 \diamond g_{t,r} \pi_r, \delta u(t) \rangle dt dr
+ \int_0^1 \int _r^1 \langle g_{t,1}I_1 \diamond g_{t,r} \pi_r, \delta u(t) \rangle dt dr \\
&= \int_0^1 \left\langle \frac{\delta \ell}{\delta u}(t), \delta u(t) \right\rangle dt \\
&\phantom{=}+ \int_0^1 \int_1^t \langle g_tI_0 \diamond g_{t,r} \pi_r, \delta u(t) \rangle dr dt
+ \int_0^1 \int_0^t \langle g_{t,1}I_1 \diamond g_{t,r} \pi_r, \delta u(t) \rangle dr dt \\
&= \int_0^1 \left\langle \frac{\delta \ell}{\delta u}(t)
+ \int_0^1 \left( g_{t,1}I_1 \mathbf{1}_{[0,t]}(r) + g_t I_0 \mathbf{1}_{[t,1]}(r) \right) \diamond g_{t,r} \pi_r dr, \delta u(t) \right\rangle dt
\,.
\end{align*}
This calculation yields the following necessary conditions for the minimizing vector field
\begin{align*}
\frac{\delta \ell}{\delta u}(t)&= - \int_0^1 \left( g_{t,1}I_1 \mathbf{1}_{[0,t]}(r) + g_t I_0 \mathbf{1}_{[t,1]}(r) \right) \diamond g_{t,r} \pi_r dr 
\,,\\
\pi_t &= \frac{1}{\sigma^2}(g_{t} I_0 - g_{t,1} I_1)^\flat
\,.
\end{align*}
Here, $\mathbf{1}_{[0,t]}(r)$ is the indicator function of the interval $[0,t]$, i.e. $\mathbf{1}_{[0,t]}(r)=1$ for $r \in [0,t]$ and 0 otherwise. The momentum map in this case involves an average over time.
\end{example}

\begin{example}
\normalfont
A third approach to symmetric registration was proposed in \citet{ZaHaNi2009}. They suggested that inexactness should be allowed in both the initial and final images, by choosing the cost functional
\[
E(u_t, I) = \int_0^1 \ell(u_t) dt + \frac{1}{2\sigma^2} \lVert I - I_0 \rVert^2_V + \frac{1}{2\sigma^2} \lVert g_1 I - I_1 \rVert^2_V.
\]
This cost functional treats $I \in V$ as an additional free variable. Intuitively, this approach means that we are looking for an energy minimal path such that both the starting and the ending points match $I_0$ and $I_1$ as well as possible. Computing the necessary conditions for the pair $(u_t, I)$ to minimize $E(u_t, I)$ and denoting $\pi := \frac{1}{\sigma^2}(g_1 I - I_1)^\flat$ yields
\begin{align*}
&\left\langle  DE(u_t, I), (\delta u_t, \delta I) \right\rangle = \int_0^1 \left\langle \frac{\delta \ell}{\delta u}(t), \delta u(t) \right\rangle dt
+ \frac{1}{\sigma^2} \langle I^\flat - I_0^\flat, \delta I \rangle
+ \langle \pi, \delta g_1 I + g_1 \delta I \rangle \\
&\qquad = \int_0^1 \left\langle \frac{\delta \ell}{\delta u}(t), \delta u(t) \right\rangle dt
+ \frac{1}{\sigma^2} \langle I^\flat - I_0^\flat + \sigma^2 g_1^{-1} \pi, \delta I \rangle
+ \left\langle \pi, g_1 \left( \int_0^1 \operatorname{Ad}_{g_t^{-1}} \delta u(t) dt \right) I \right\rangle \\
&\qquad = \int_0^1 \left\langle \frac{\delta \ell}{\delta u}(t) + g_t I \diamond g_{t,1} \pi, \delta u(t) \right\rangle dt
+ \frac{1}{\sigma^2} \langle I^\flat - I_0^\flat + \sigma^2 g_1^{-1} \pi, \delta I \rangle
\,.
\end{align*}
This leads to
\begin{align*}
\frac{\delta \ell}{\delta u}(t) &= -\, g_t I \diamond g_{t,1} \pi 
\,,\\
I^\flat &= I_0^\flat - \sigma^2 g_1^{-1} \pi 
\,,\\
\pi &= \frac{1}{\sigma^2}(g_1 I - I_1)^\flat.
\end{align*}
For images $I \in \mathcal{F}(\Omega, \mathbb{R})$ as in Section \ref{subsec-imagematching}, the equation for $I^\flat$ can be solved explicitly to find
\[
I = \frac{I_0 + \lvert D \phi_1 \rvert I_1 \circ \phi_1}{1+\lvert D\phi_1 \rvert}
\,.
\]
In this case $I$ constitutes a weighted average of $I_0$ and the deformed image $\phi_1^{-1} \cdot I_1$ at time $t=0$.
\end{example}

These examples all have a similar momentum map structure. The examples differed in the time point at which the inexactness of the matching was measured, or, as in the last case, in which of the images was being compared. We have restricted our attention primarily to only one of these possible formulations of LDM. However, the geometric interpretations are clearly similar in all cases and the momentum map plays the determining role in each case.

\section{Nonlinear Generalizations}
\label{nonlin_gen}

We now show that the formalism developed in \S\ref{subsec-AbFrm} generalizes easily to the case when the set of images is not necessarily a vector space and the cost function is not necessarily the Euclidean distance. This situation arises, for example, in the Landmark Matching Problem associated to points on the sphere for the study of neocortex, see \citet{MiTrYo2002} and references therein.

Suppose the set of images is a manifold $Q$ on which a group of transformation $G$ acts on the left. As before, we denote by $gI$ the action $g\in G$ on $I\in Q$. We consider a cost function of the form
\begin{equation}\label{action_manifolds}
E(u_t)=\int_0^1\ell(u_t)dt+ F\left(g^u_1 I_0,I_1\right),
\end{equation}
where $F$ is defined on $Q\times Q$. When $Q$ is a vector space $V$ with inner product norm $\|\cdot\|_V$, we recover the cost function \eqref{eq-abfrm-cost} by choosing
\[
F(I,J):=\frac{1}{2\sigma^2}\|I-J\|^2.
\]

The next theorem establishes the stationarity condition associated to the cost in \eqref{action_manifolds}.

\begin{theorem} Given a curve $t\mapsto u_t$ in the Lie algebra $\mathfrak{g}$ of $G$, we have
\[
DE(u_t)=0\;\Longleftrightarrow\;\frac{\delta \ell}{\delta u}(t)
=-\,\mathbf{J}\left(g^u_{t,1}\,\partial_1F(J^0_1,I_1)\right),
\]
where $\mathbf{J}:T^*Q\rightarrow\mathfrak{g}^*$ is the cotangent bundle momentum map and $\partial_1F(J^0_1,I_1)\in T^*_{J^0_1}Q$ is the tangent map to $F$ relative to the first variable. The momentum $\frac{\delta\ell}{\delta u}(t)$ satisfies the Euler-Poincar\'e equation
\[
\frac{d}{dt}\frac{\delta\ell}{\delta u}(t)=-\operatorname{ad}^*_{u_t}\frac{\delta\ell}{\delta u}(t).
\]
\end{theorem}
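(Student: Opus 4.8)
The plan is to imitate the proof of Theorem~\ref{delta_E_left} line by line, the only new feature being that the $V^\ast$-valued momentum $\pi$ is replaced by the covector $\partial_1F(J^0_1,I_1)\in T^\ast_{J^0_1}Q$ and the diamond operation by the cotangent bundle momentum map $\mathbf{J}\colon T^\ast Q\to\mathfrak{g}^\ast$, characterized by $\langle\mathbf{J}(\alpha_q),\eta\rangle=\langle\alpha_q,\eta q\rangle$ for $\alpha_q\in T^\ast_qQ$ and $\eta\in\mathfrak{g}$, and $\operatorname{Ad}^\ast$-equivariant in the sense that $\mathbf{J}(h\cdot\alpha_q)=\operatorname{Ad}^\ast_{h^{-1}}\mathbf{J}(\alpha_q)$, where $h\cdot\alpha_q$ denotes the cotangent lift of the $G$-action on $Q$.

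First I would differentiate $E$ along a variation $\varepsilon\mapsto u_\varepsilon$ of the curve $u$. The kinetic term contributes $\int_0^1\langle\frac{\delta\ell}{\delta u}(t),\delta u(t)\rangle\,dt$ as before. Since $I_1$ is fixed, the chain rule applied to $F$ gives $\frac{d}{d\varepsilon}\big\vert_{\varepsilon=0}F(g^{u_\varepsilon}_1I_0,I_1)=\langle\partial_1F(J^0_1,I_1),\delta(g^u_1I_0)\rangle$, a pairing of a covector with a tangent vector, both based at $J^0_1=g^u_1I_0$. Here $\delta(g^u_1I_0)$ is the image of $\delta g^u_1\in T_{g^u_1}G$ under the tangent map of the map $g\mapsto gI_0$. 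By Lemma~\ref{delta_u_right} with $s=0$ and $t=1$ we have $\delta g^u_1=g^u_1\,\xi$ (left translation to $T_{g^u_1}G$) with $\xi:=\int_0^1\operatorname{Ad}_{g^u_{0,r}}\delta u(r)\,dr\in\mathfrak{g}$; using the associativity $g^u_1\exp(\varepsilon\xi)I_0=g^u_1\cdot(\exp(\varepsilon\xi)I_0)$ of the action and differentiating, one gets $\delta(g^u_1I_0)=g^u_1\cdot(\xi I_0)$, where $\xi I_0\in T_{I_0}Q$ is the infinitesimal action and $g^u_1\cdot(\,\cdot\,)$ is the tangent-lifted $G$-action on $TQ$.

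Next I would dualize. Because the cotangent-lifted action is dual to the tangent-lifted action, $\langle\partial_1F(J^0_1,I_1),g^u_1\cdot(\xi I_0)\rangle=\langle(g^u_1)^{-1}\cdot\partial_1F(J^0_1,I_1),\xi I_0\rangle$, now a pairing over $T^\ast_{I_0}Q\times T_{I_0}Q$, which by the defining property of $\mathbf{J}$ equals $\langle\mathbf{J}\big((g^u_1)^{-1}\cdot\partial_1F(J^0_1,I_1)\big),\xi\rangle$. Substituting $\xi$, commuting the pairing with the integral, and using the definition of $\operatorname{Ad}^\ast$, this becomes $\int_0^1\langle\operatorname{Ad}^\ast_{g^u_{0,t}}\mathbf{J}\big((g^u_1)^{-1}\cdot\partial_1F(J^0_1,I_1)\big),\delta u(t)\rangle\,dt$. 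Collecting the two contributions to $\langle DE(u),\delta u\rangle$ and demanding that the integrand vanish for all variations $\delta u$ gives $\frac{\delta\ell}{\delta u}(t)=-\operatorname{Ad}^\ast_{g^u_{0,t}}\mathbf{J}\big((g^u_1)^{-1}\cdot\partial_1F(J^0_1,I_1)\big)$. Finally, using $g^u_{0,t}=(g^u_t)^{-1}$, the $\operatorname{Ad}^\ast$-equivariance of $\mathbf{J}$, and $g^u_t(g^u_1)^{-1}=g^u_{t,1}$ (the same manipulation as at the end of the proof of Theorem~\ref{delta_E_left}), this rewrites as $\frac{\delta\ell}{\delta u}(t)=-\mathbf{J}\big(g^u_{t,1}\,\partial_1F(J^0_1,I_1)\big)$. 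Every step is an equivalence once $\delta u$ ranges over all variations, so the converse holds too. For the Euler-Poincar\'e equation I would note that $\mathbf{J}\big((g^u_1)^{-1}\partial_1F(J^0_1,I_1)\big)$ is independent of $t$ and that $\frac{\delta\ell}{\delta u}(t)=-\operatorname{Ad}^\ast_{(g^u_t)^{-1}}$ of it; then the differentiation argument of Lemma~\ref{EP}, using the same rules for $\partial_t\operatorname{Ad}^\ast$, applies verbatim to yield $\frac{d}{dt}\frac{\delta\ell}{\delta u}(t)=-\operatorname{ad}^\ast_{u_t}\frac{\delta\ell}{\delta u}(t)$.

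The only nonroutine point --- hence the main obstacle --- is the base-point bookkeeping for the tangent- and cotangent-lifted actions on $Q$: one must check that $\delta(g^u_1I_0)$ really lies in $T_{J^0_1}Q$, so that pairing it against $\partial_1F(J^0_1,I_1)$ is meaningful, and that applying $(g^u_1)^{-1}$ transports this covector to $T^\ast_{I_0}Q$, which is precisely where the identity $\langle\mathbf{J}(\alpha_{I_0}),\xi\rangle=\langle\alpha_{I_0},\xi I_0\rangle$ can be invoked. In the vector-space setting of Theorem~\ref{delta_E_left} these identifications were invisible thanks to the global trivialization $T^\ast V\cong V\times V^\ast$; making them explicit, and checking that the pair $(\partial_1F,\mathbf{J})$ behaves exactly as the pair $(\pi,\diamond)$ did, is the content of this generalization.
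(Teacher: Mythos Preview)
Your proposal is correct and follows essentially the same approach as the paper's proof: both compute the variation using Lemma~\ref{delta_u_right}, transport $\partial_1F(J^0_1,I_1)$ from $T^\ast_{J^0_1}Q$ to $T^\ast_{I_0}Q$ via the cotangent-lifted action, invoke the defining property of $\mathbf{J}$, and then move $\operatorname{Ad}^\ast_{g^u_{0,t}}$ inside $\mathbf{J}$ by equivariance; the Euler--Poincar\'e part is likewise handled by the same reference to Lemma~\ref{EP}. Your write-up is in fact more explicit than the paper's about the base-point bookkeeping (the tangent- and cotangent-lifted actions), which is exactly the point you flag as the main obstacle.
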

\begin{proof} The proof is similar to that of Theorem \ref{delta_E_left}. We will use the formula $\left\langle\mathbf{J}(\alpha_q),u\right\rangle=\left\langle\alpha_q,u_Q(q)\right\rangle$ for the momentum map $\mathbf{J}:T^*Q\rightarrow\mathfrak{g}^*$ associated to the cotangent lift action. Using Lemma \ref{delta_u_right}, we calculate
\begin{align*}
\left\langle DE(u_t),\delta u_t\right\rangle&=\delta \left(\int_0^1\ell(u(t))dt+F\left(g^u_1 I_0,I_1\right)\right)\\
&=\int_0^1\left\langle \frac{\delta \ell}{\delta u}(t),\delta u(t)\right\rangle dt+\left\langle \partial_1F(J^0_1,I_1),(\delta g^u_1) I_0\right\rangle\\
&=\int_0^1\left( \left\langle \frac{\delta \ell}{\delta u}(t),\delta u(t) \right\rangle dt +\left\langle \left(g^u_1\right)^{-1}\partial_1F(J^0_1,I_1),\left(\operatorname{Ad}_{g_{0,t}^u}\delta u(t)\right)_QI_0\right\rangle \right)dt \\
&=\int_0^1 \left(\left\langle \frac{\delta \ell}{\delta u}(t),\delta u(t)\right\rangle +\left\langle\mathbf{J}\left(\left(g^u_1\right)^{-1}\partial_1F(J^0_1,I_1)\right),\operatorname{Ad}_{g_{0,t}^u}\delta u(t)\right\rangle \right) dt\\
&=\int_0^1\left(\left\langle \frac{\delta \ell}{\delta u}(t)+\operatorname{Ad}^*_{g_{0,t}^u} \left(\mathbf{J}\left(\left(g^u_1\right)^{-1}\partial_1F(J^0_1,I_1)\right)\right),\delta u(t)\right\rangle\right)dt
\,,
\end{align*}
which must hold for all variations $\delta u(t)$. Therefore,
\begin{align*}
\frac{\delta \ell}{\delta u}(t)&=-\operatorname{Ad}^*_{g_{0,t}^u} \left(\mathbf{J}\left(\left(g^u_1\right)^{-1}\partial_1F(J^0_1,I_1)\right)\right)\\
&=-\,\mathbf{J}\left(g^u_{t,1}\,\partial_1F(J^0_1,I_1)\right)
\,,
\end{align*}
as required. The same proof as for Lemma \ref{EP} shows that the Euler-Poincar\'e equations are verified.
\end{proof}

When $Q$ is a vector space $V$, this stationarity condition can be rewritten equivalently by using the diamond map, $(\,\diamond\,)$, as
\[
\frac{\delta \ell}{\delta u}(t)=-J^0_t\diamond \left(g^u_{t,1}\,\partial_1F(J^0_1,I_1)\right).
\]

\paragraph{Landmark matching on manifolds.}
In the case of the Landmark Matching Problem on a Riemannian manifold $Q$, one chooses the cost function
\[
F(q_1,..,q_n;p_1,...,p_n):=\sum_{i=1}^n\frac{1}{2\sigma^2}d(q_i,p_i)^2,
\]
where $d$ is the Riemannian distance. This approach is used for imaging of the neocortex, where $Q$ is taken to be the sphere $S^2$. The energy to minimize has the form
\[
E(u_t)=\frac{1}{2}\int_0^1|u_t|^2_\mathcal{H}+\sum_{i=1}^n\frac{1}{2\sigma^2}d(\phi_1(q_i),p_i)^2,
\]
where $q_i, p_i\subset S^2$ are given.

\paragraph{LDM multimodal image matching.} The framework developed above allows us to understand geometrically the model developed in \citet{Vialard2009}, \S3.2. This model deals also with a change of intensity in the image $I:\Omega\rightarrow X$. This change of intensity can be modeled by an action $\eta\circ I$ of a diffeomorphism of the template co-domain $X$. In this case, the energy can have the general form
\[
E(v_t,w_t)=\int_0^1\ell(v_t,w_t)dt+F(\eta_1\circ I_0\circ\phi_1^{-1},I_1),
\]
where $\eta_t\in\operatorname{Diff}(X)$ and $\phi_t\in\operatorname{Diff}(\Omega)$ are the flows of $v_t$ and $w_t$, respectively. This problem can be recast in our formulation by considering the action of the direct product $\operatorname{Diff}(\Omega)\times\operatorname{Diff}(X)$ on the manifold $Q=\mathcal{F}(\Omega,X)$ given by
\[
(\phi,\eta)\cdot I:=\eta\circ I\circ\phi^{-1}.
\]
For simplicity, we suppose that $X$ is a vector space, but in general $X$ can be an arbitrary manifold. The cotangent lifted action on $\pi$ reads
\[
(\phi,\eta)\cdot (I,\pi) (x)= |\det D\phi^{-1}(x)|\,D\eta^{-1}(I(\phi(x)))^T\cdot \pi(\phi^{-1}(x))
\]
and the momentum map is
\[
\mathbf{J}(I,\pi)=\left(-\pi\boldsymbol{\cdot}\nabla I,\int_\Omega\pi(x)\delta_{I(x)}dx\right).
\]
Using these formulas, the stationarity condition is
\[
\frac{\delta\ell}{\delta v}=\pi_t\boldsymbol{\cdot}\nabla J^0_t,\qquad\frac{\delta \ell}{\delta w}=-\int_\Omega\pi_t(x)\delta_{J^0_t(x)}dx,
\]
where $J^0_t=\eta_t\circ I_0\circ \phi_t^{-1}$ and 
\[
\pi_t(x):=|\det D\phi^{-1}_{t,1}(x)|\left(D\eta_{t,1}(J^0_1(x))\right)^{-T}\partial_1F(J^0_1,I_1)(\phi_{t,1}^{-1}(x)).
\]
The last expression is obtained using the formula of the cotangent lifted action and the equality
\[
\left(D\eta_{t,1}^{-1}(J^0_t(\phi_{t,1}(x)))\right)^T=
D\eta_{t,1}(J^0_1(x))^{-T}.
\]
For more discussion, see \citet{Vialard2009}. 

\paragraph{Alternative approach.} We now consider an alternative approach that affects the geometric shape of the image $I:\Omega\rightarrow X$, as considered in \citet{Trouve1995}. This approach is different from that considered above. For example we can consider the case $X=S^2$ of images of unitary vectors in $\mathbb{R}^3$. In this case the shape can be modified by letting various groups of matrices act on $S^2$. These matrices are of course allowed to depend on the domain $\Omega$. We thus need to consider the group $\mathcal{F}(\Omega,G)$, where $G$ is a group acting on $X$. In order to also take into account the transformation on the domain, the semidirect product $\operatorname{Diff}(\Omega)\,\circledS\,\mathcal{F}(\Omega,G)\ni (\phi,
\theta)$ needs to be considered as in \citet{Trouve1995}. This group acts in a natural way on the space $\mathcal{F}(\Omega,X)$ of images via the left action
\[
(\phi,\theta)\cdot I=(\theta I)\circ\phi^{-1},
\]
where the function $\theta I$ is defined by $(\theta I)(x):=\theta(x)I(x)$ and in the last term we use the $G$-action on $X$.
A vector field on this Lie algebra has components $(u,\nu)$ where $u$ is a vector field on $\Omega$ and $\nu:\Omega\rightarrow\mathfrak{g}$. Using the multiplication rule $(\phi,\theta)(\bar\phi,\bar\theta)=(\phi\circ\bar\phi,(\theta\circ\bar\phi)\bar\theta)$ in the semidirect product, the ODE $\partial_t(\phi_t,\theta_t)=(u_t,\nu_t)(\phi_t,\theta_t)$ reads
\[
\dot\phi_t=u_t\circ\phi_t,\qquad \dot\theta_t=(\nu_t\circ\phi_t)\theta_t,\qquad \phi_0=e, \qquad\theta_0=e.
\]
For simplicity, we suppose that $X$ is a vector space. The infinitesimal action on the space of images reads $(u,\nu)I=\nu I-\nabla I\boldsymbol{\cdot}u$ hence the cotangent bundle momentum map is
\[
\mathbf{J}(I,\pi)=\left(-\pi\boldsymbol{\cdot}\nabla I,I\diamond \pi\right),
\]
where $I\diamond \pi$ is the function with values in $\mathfrak{g}^*$ defined by $(I\diamond \pi)(x)=I(x)\diamond \pi(x)$ and the diamond on the right denotes the momentum map associated to the action of $G$ on $X$. In order to formulate the stationarity condition, we also need the expression of the cotangent lifted action given by
\[
(\phi,\theta)\cdot (I,\pi)=\left((\theta I)\circ\phi^{-1},|\det D\phi^{-1}|\,(\theta\pi)\circ\phi^{-1}\right).
\]
The cost function has the form
\[
E(u_t,\nu_t)=\int_0^1\ell(u_t,\nu_t)dt-F\left((\theta_1I_0)\circ\phi_1^{-1},I_1\right).
\]
The stationarity conditions are thus given by
\[
\frac{\delta\ell}{\delta u}(t)=\pi_t\boldsymbol{\cdot}\nabla J^0_t,\qquad \frac{\delta \ell}{\delta\nu}(t)=-J^0_t\diamond \pi_t,
\]
where $J^0_t=(\phi_t,\theta_t)\cdot I_0=(\theta_t I_0)\circ\phi_t^{-1}$ and
\[
\pi_t=|\det D\phi_{1,t}^{-1}|\left(\theta_{1,t}\partial_1F(J^0_1,I_1)\right)\circ\phi_{1,t}^{-1}.
\]
For example, when $F(I,J)=\frac{1}{2\sigma^2}\|I-J\|_{L^2}$, relative to an inner product on $X$, then $\partial_1F(I,J)=\frac{1}{\sigma}(I-J)^\flat\in\mathcal{F}(\Omega,X^*)$, where $\flat$ is associated to the inner product on $X$. In this case, the stationarity conditions are
\[
\frac{\delta \ell}{\delta u}(t)=|\det D\phi_{1,t}^{-1}|\,(J^0_t-J^1_t)^\flat\nabla J^0_t,\qquad \frac{\delta\ell}{\delta\nu}=J^0_t\diamond |\det D\phi_{1,t}^{-1}|\,(J^0_t-J^1_t)^\flat.
\]

\section{Conclusions}
\label{sec-conclusions-outlook}

This paper has revealed that Beg's algorithm from \citet{Beg2003} and \citet{Begetal2005} for image registration in the LDM framework is  the cotangent-lift momentum map associated to the action of diffeomorphisms on scalar functions. Accordingly, the momentum map has emerged as a central organizing principle in the abstract framework inspired by image registration. The momentum map provides the means of unifying the LDM approach for the registration of different data structures that use different penalty terms and different Lie groups. Different data structures summon different group actions to define their transformations and they will therefore give rise to different momentum maps. But once the momentum map is computed, it is straight-forward to implement the corresponding gradient-descent scheme for image registration. 
{
The momentum map systematically incorporates both the specification of distance on the space of images and the transformation properties of their data structure. 
}

Exploring the specification of distance and dealing with other data structures has been left for future work. For example, the pioneering work of \citet{Alex-etal-2001} and \citet{Cao2006} on the registration of DT-MRIs led to the action on symmetric tensors discussed in Section \ref{DT-MRI}. We plan to compare the momentum map for this action with the usual push-forward action on tensor fields to gain further insights into the matching procedures for tensor data structures.

{
The advantage of our method in practical applications is that it systematizes the development of algorithms for registering images in various types of data structure, by identifying the momentum map as the shared fundamental element for registration of images in any data structure. This means, for example, that registration of multi-channel or multi-modal images can be accomplished simply by applying the present method to the \emph{sum} of  momentum maps for the different types of data structures.
}

Images encountered in applications often contain information at several length scales. A heuristic approach for adapting the registration procedure to take into account these length scales suggested replacing the kernel in \eqref{eq-mot-grad} by the sum of two kernels $K_{\alpha_1} + K_{\alpha_2}$, with two different length scales $\alpha_1$ and $\alpha_2$ for their corresponding filters. We have shown that this strategy has a geometric interpretation. Namely, instead of using a single diffeomorphism group to perform image registration, we can use the semidirect product of two such groups, each associated to its own length scale, {
the larger one \emph{sweeping} the smaller one by semidirect-product action.
}
The resulting equations \eqref{kernel-sum} then \emph{coincide} with the sum-of-kernels strategy. Similarly, the same result could be obtained for the sum of three and more kernels. 
{
Recently, this sum-of-kernels strategy for registration has been applied successfully in \cite{RiViWoHoRu2010} for measurement of atrophy of tissues in the hippocampus due to Alzheimer's disease. This result opens new perspectives in clinical applications of multi-resolution imaging.
}

Other formulations of LDM that were intended to make the registration symmetric, as proposed by \citet{Avants2008}, \citet{Beg2007} and \citet{ZaHaNi2009}, were also discussed and written geometrically. We have shown that all these cases exhibit similar momentum map structures. The main differences arise from the choice of the time at which the momentum map is to be evaluated. Once again, the momentum map appears as a unifying framework allowing systematic comparisons among the different examples.

We have also explored a natural generalization of the framework to incorporate data structures living in manifolds, which do not have the linear structures of vector fields. Examples included landmarks on a sphere. Since in this case no norm is available to measure distances between two images, a distance function must be chosen. Further applications and capabilities of this nonlinear framework will be explored in future work.

\addcontentsline{toc}{section}{References}

\bibliographystyle{plainnat}

\end{document}